\newcommand{\lr}[1]{\langle #1 \rangle}
\begin{document}
\title{On the Right Path: A Modal Logic for Supervised Learning}

\author{Alexandru Baltag\inst{1}\and Dazhu Li\inst{1,2}\and Mina Young Pedersen\inst{1}}

\authorrunning{A. Baltag  et al.}

\institute{ILLC, University of Amsterdam, The Netherlands\\ \email{\{thealexandrubaltag,minaypedersen\}@gmail.com}\and Department of Philosophy, Tsinghua University\\ \email{lidazhu91@163.com}}

\maketitle              
\begin{abstract}
Formal learning theory formalizes the process of inferring a general result from examples, as in the case of inferring grammars from sentences when learning a language. Although empirical evidence suggests that children can learn a language without responding to the correction of linguistic mistakes, the importance of Teacher in many other paradigms is significant. Instead of focusing only on learner(s), this work develops a general framework---the \textit{supervised learning game (SLG)}---to investigate the interaction between \textit{Teacher} and \textit{Learner}. In particular, our proposal highlights several interesting features of the agents: on the one hand, Learner may make mistakes in the learning process, and she may also ignore the potential relation between different hypotheses; on the other hand, Teacher is able to correct Learner's mistakes, eliminate potential mistakes and point out the facts ignored by Learner. To reason about strategies in this game, we develop a \textit{modal logic of supervised learning (SLL)}. Broadly, this work takes a small step towards studying the interaction between graph games, logics and formal learning theory.
\end{abstract}

\keywords{Formal Learning Theory, Modal Logic, Dynamic Logic, Undecidability, Graph Games}

\section{Introduction}\label{sec:introduction}
Formal learning theory formalizes the process of inferring a general result from examples, as in the case of inferring grammars from sentences when learning a language. A good way of understanding this general process is by treating it as a game played by \textit{Learner} and \textit{Teacher}. It starts with a class of possible worlds, where one of them is the actual one chosen by Teacher. Learner's aim is to get to know which one it is. Teacher inductively provides information about the world, and whenever receiving a piece of information Learner picks a conjecture from the class, indicating which one she thinks is the case. Different success conditions for Learner can be defined. In this article we require that at some finite stage of the procedure Learner decides on a correct hypothesis. This kind of learnability is known as \textit{finite identification} \cite{finiteidentification}.

Although empirical evidence suggests that children can learn a language without responding to the correction of linguistic mistakes \cite{languagelearning}, the importance of teachers in many other paradigms is significant. For instance, in the paradigm of \textit{learning from queries and counterexamples} \cite{setlearning}, Teacher has a strong influence on whether the process is successful. Moreover, results in \cite{learning} suggest that a helpful Teacher may make learning easier. In this work, instead of focusing only on Learner, we highlight the interactive nature of learning.

As noted in \cite{learning}, a concise model for characterizing the interaction between Learner and Teacher is the \textit{sabotage game (SG)}. A SG is played on a graph with a starting node and a goal node, and it goes in rounds: Teacher first cuts an edge in the graph, then Learner makes a step along one of the edges still available. Both of them win iff Learner arrives at the goal node \cite{lig}. From the perspective of formal learning theory, this step-by-step game depicts a guided learning situation. Say, a natural interpretation is the situation of theorem proving. In this case, the starting node is given by axioms, the goal node stands for the theorem to be proved, other nodes represent lemmas conjectured by Learner, and edges capture Learner's possible inferences between them. Inferring is represented by moving along those edges. The information provided by Teacher can be treated as his feedback, i.e., removing edges to eliminate wrong inferences. The success condition is given by the winning condition: the learning process has been successful if Learner reaches the goal node, i.e., proving the theorem. For the general correspondence between SG and learning models, we refer to \cite{learning}.

However, we would argue that this application of SG gives a highly restricted model of learning. For instance,

\begin{itemize}
\item[$\bullet$] Intuitively, all links in the graph are inferences conjectured by Learner, which may include mistakes. From the perspective of Learner, the wrong inferences cannot be distinguished from the correct ones. Although it is reasonable to assume that Teacher is able to do so, SG does not highlight that Learner lacks perfect information. Besides, Teacher in SG has to remove a link in each round, which is overly restrictive.
\item[$\bullet$] Links removed represent wrong inferences between lemmas. So, whether or not a link deleted occurs in Learner's current proof (i.e., the current process) is important. If the proof includes a mistake, any inference after the mistake should not make sense. However, if a potential transition having not occurred in the proof is wrong, Learner can continue with her current proof. Clearly, SG cannot distinguish between these two cases.
\item[$\bullet$] The game does not distinguish between all the various ways Learner can reach the goal. That is, as long as Learner has come to the right conclusion, the game cannot tell us whether Learner has come to this conclusion in a coherent way. Reaching the correct hypothesis by wrong transitions is not reliable. The well-known Gettier cases \cite{gettier} where one has justified true belief, but not knowledge are also examples of situations in which one wrongly reaches the right conclusion. Thus, the theory developed in \cite{learning} is subject to the Gettier problems.
\item[$\bullet$] Teacher can only \textit{delete} links to decide what Learner will not learn, and thus he only teaches what Learner has already conjectured. However, during the process of learning, `possibilities may also be ignored due to the more questionable practice if assuming that one of the theories under consideration must be true. And complexity can come to be ignored through convention or habit' (\cite{learningandphilosophy}, pp. 260). Hence, it is natural to assume that Learner may ignore the correct relation between different hypotheses.
\end{itemize}

In this paper, we therefore propose a new game, called the \textit{supervised learning game (SLG)}. This game differs from the SG on several accounts, motivated by the mentioned restrictions. Before introducing its definition, we first define some auxiliary notions. 

Let $S=\lr{w_0,w_1,...,w_n}$ be a non-empty, finite sequence. We use $e(S)$ to denote its last element. Define $Set(S):=\{\lr{w_0,w_1},\lr{w_1,w_2},...,\lr{w_{n-1},w_n}\}$. For the particular case when $S$ is a singleton, $Set(S):=\emptyset$. Besides, for any $\lr{w_i,w_{i+1}}\in Set(S)$, define $S|_{\lr{w_i,w_{i+1}}}:=\lr{w_0,w_1,...,w_u}$, where $\lr{w_u,w_{u+1}}=\lr{w_i,w_{i+1}}$ and $\lr{w_u,w_{u+1}}\not=\lr{w_j,w_{j+1}}$ for any $j<i$. Intuitively, $S|_{\lr{w_i,w_{i+1}}}$ is obtained by deleting all elements occurring after $w_u$ from $S$, where $\lr{w_u,w_{u+1}}$ is the first occurrence of $\lr{w_i,w_{i+1}}$ in $S$. Say, when $S=\lr{a,b,c,a,b}$, we have $S|_{\lr{a,b}}=\lr{a}$. Now let us introduce SLG.

\begin{definition}[SLG]\label{def-clg}
A SLG $\lr{W,R_1,R_2,\lr{s},g}$ is given by a graph $\lr{W,R_1,R_2}$, the starting node $s$ and the goal node $g$. A position of the game is a tuple $\lr{R_1^i,S^i}$. The initial position $\lr{R_1^0,S^0}$ is given by $\lr{R_1,\lr{s}}$. Round $n+1$ from position $\lr{R_1^n,S^n}$ is as follows: first, Learner moves from $e(S^n)$ to any of its $R_1$-successors $s'$; then Teacher does nothing or acts out one of the following three choices:

\begin{itemize}[leftmargin=1cm]
\item[(1).] Extend $R_1^n$ with some $\lr{v,v'}\in R_2$;
\item[(2).] Transfer $\lr{S^n,s'}$ to $\lr{S^n,s'}|_{\lr{v,v'}}$ by cutting $\lr{v,v'}$ from $Set(\lr{S^n,s'})\setminus R_2$;
\item[(3).] Delete some $\lr{v,v'}\in (R_1\setminus R_2)\setminus Set(\lr{S^n,s'})$ from $R_1$.
\end{itemize}
The new position, denoted $\lr{R_1^{n+1},S^{n+1}}$, is $\lr{R_1^n,S^n}$ (when Teacher does nothing), $\lr{R_1^{n}\cup\{\lr{v,v'}\},\lr{S^{n},s'}}$ (when he chooses (1)), $\lr{R_1^{n}\setminus\{\lr{v,v'}\},\lr{S^n,s'}|_{\lr{v,v'}}}$ (if he acts as (2)), or $\lr{R_1^{n}\setminus\{\lr{v,v'}\},\lr{S^n,s'}}$ (if he chooses (3)). It ends if Learner arrives at $g$ through an $R_2$-path $\lr{s,...,g}$ or cannot make a move, with them winning in the former case and losing in the latter.
\end{definition}

Intuitively, the clause for Learner illustrates that she cannot distinguish the links starting from the current position. The sequence $S^i$ is her current learning process, which may include mistakes; $R_{1}$ represents Learner's possible inferences between conjectures; and $R_{2}$ is the correct inferences. For any position $\lr{R_1^n,S^n}$ we have $Set(S^n)\subseteq R_1^n$. Besides, both (2) and (3) above are concerned with the case that Teacher eliminates wrong transitions, but there is an important difference. The former one concerns the case that Teacher gives Learner a counterexample to show that she has gone wrong somewhere in her current process, so Learner should move back to the conjecture right before the wrong transition. In contrast, (3) illustrates that Teacher eliminates a wrong transition conjectured that has not occurred in Learner's process yet, therefore this action does not modify Learner's current process.

From the winning condition, we know that both the players cooperate with each other. It is important to recognize that Learner's action does not conflict with her cooperative nature: she makes an effort to achieve the goal in each round. Moreover, it is not hard to see that players cannot win when there exists no $R_2$-path from the starting node to the goal node. This is reasonable, since their interaction makes sense only when the goal is learnable. The correlation between the situation of theorem proving and SLG is shown in Table \ref{tab:2}.

\begin{table}[]
\centering
\caption{Correspondence between theorem proving and supervised learning games.}
\label{tab:2}
\begin{tabular}{|p{5.7cm}|p{5.7cm}|}
\hline
\textbf{Theorem Proving} & \textbf{Supervised Learning Games} \\
\hline
Axioms & Starting node\\
Theorem & Goal node \\
Lemmas conjectured by Learner & Other states except the starting state and the goal state \\
Learner's possible inference from $a$ to $b$& $R_{1}$-edge from $a$ to $b$ \\
Correct inference from $a$ to $b$ & $R_{2}$-edge from $a$ to $b$ \\
Inferring $b$ from $a$& Transition from $a$ to $b$ \\
Proof for $a$ & $R_1$-sequence from the starting node to $a$\\
Correct proof for $a$ & $R_1$-sequence $S$ from the starting node to $a$ and $Set(S)\subseteq R_2$\\
Giving a counterexample to the inference from $a$ to $b$ in the proof $S$ & Modifying $S$ to $S|_{\lr{a,b}}$ ($\lr{a,b}\in Set(S)$)\\
Giving a counterexample to the conjectured inference from $a$ to $b$ not in the proof $S$ & Deleting $\lr{a,b}$ from $R_1$ ($\lr{a,b}\not\in Set(S)$)\\
Pointing out a potential inference from $a$ to $b$ not conjectured by Learner before & Extending $R_1$ with $\lr{a,b}$\\
\hline
\end{tabular}
\end{table}

\begin{remark}
The interpretation of SLG presented in Table \ref{tab:2} can be easily adapted to characterize other paradigms in formal learning theory, such as language learning and scientific inquiry. More generally, any single-agent games, such as solitaire and computer games, can be converted into SLG. Say, the player (Learner) does not know the correct moves well, but she knows the starting position and the goal position, and has some conjectures about the moves of the game. Besides, she can be taught by Teacher: she just attempts to play it, while Teacher instructs her positively (by revealing more correct moves) or negatively (by pointing out incorrect moves, in which case Learner may have to be moved back to the moment previous to the first incorrect move, if she made any).
\end{remark}

\begin{example}\label{example-clg}
Let us consider a simple example of SLG, as depicted in Figure \ref{figure:clg}. The starting node is $a$ and the goal node is $G$. We show that players have a winning strategy by depicting the game to play out as follows. Learner begins with moving along the only available edge to node $b$. Teacher in his turn can make $\lr{e, f}$ `visible' to Learner by adding it to $R_{1}$. Then, Learner proceeds to move along $\lr{b, c}$, and Teacher extends $\lr{b, e}$ to $R_1$. Afterwards, Learner continues on the only option $\lr{c, G}$. Although she now has already arrived at the goal node, her path $\lr{a,b,c,G}$ is not an $R_2$-sequence. So, Teacher can remove $\lr{b,c}$ moving Learner back to node $b$. Next, Learner has to move to $e$, and Teacher can delete $\lr{c,G}$ from $R_1$. Finally, Learner can arrives at $G$ in 2 steps with Teacher doing nothing. Now we have $Set(\lr{a,b,e,f,G})\subseteq R_2$, so they win. 
\end{example}

\begin{figure}
\centering
\begin{tikzpicture}
\node(a)[circle,draw,inner sep=0pt,minimum size=5mm] at (0,0.75) {$a$};
\node(a)[circle,draw,inner sep=1pt,minimum size=7mm] at (0,0.75) {};
\node(b)[circle,draw,inner sep=0pt,minimum size=5mm] at (1.5,0){$b$};
\node(c)[circle,draw,inner sep=0pt,minimum size=5mm] at (1.5,1.5){$c$};
\node(e)[circle,draw,inner sep=0pt,minimum size=5mm] at (3,0){$e$};
\node(g)[circle,draw,inner sep=0pt,minimum size=5mm] at (3,1.5){$G$};
\node(f)[circle,draw,inner sep=0pt,minimum size=5mm] at (4.5,0.75){$f$};
\draw[->](a) to node [below] {$1,2$} (b);
\draw[->](a) to node [above] {$2$} (c);
\draw[->](f) to node [above] {$1,2$}(g);
\draw[->](c) to node [above] {$1$} (g);
\draw[->](b) to node [right] {$1$} (c);
\draw[->](e) to node [below] {$2$} (f);
\draw[->](b) to node [below] {$2$} (e);
\end{tikzpicture}
\caption{A SLG game ($R_{1}$ is labelled with `1' and $R_{2}$ with `2').}
\label{figure:clg}
\end{figure}
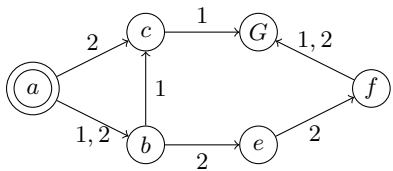

To reason about players' strategies in SLG, in what follows we will study SLG from a modal perspective. \textit{Sabotage modal logic (SML)} \cite{sabotage} is known to be a suitable tool to characterize SG, which extends basic modal logic with a sabotage modality $\lr{-}\varphi$ stating that there is an edge such that,  $\varphi$ is true at the evaluation node after deleting the edge from the model. However, given the differences between SG and SLG, we will develop a novel \textit{modal logic of supervised learning (SLL)} to capture SLG.

\medskip

\noindent \textit{Outline.} Section \ref{sec:lml} introduces SLL along with its application to SLG and some preliminary observations. Section \ref{sec:expressivepower} studies the expressivity of SLL. Section \ref{sec:undecidable} investigates the model checking problem and satisfiability problem for SLL. We end this paper by Section \ref{sec:conclusion} on conclusion and future work.

\section{Modal Logic of Supervised Learning (SLL)}\label{sec:lml}
To be an ideal tool, the logic SLL should at least be able to precisely express players' actions and depict their winning strategies. In this section, we first introduce its language and semantics. Then we analyze its applications to SLG. Finally, we make various observations, including some logical validities and relations between SLL and other logics.

\subsection{Language and Semantics}\label{subsec:language and semantics}
We begin by considering the action of Learner. In SML, the standard modality $\Diamond$ characterizes the transition from a node to its successors and corresponds well to Learner's actions in SG. However, operator $\Diamond$ is not any longer sufficient in our case. Note that after Teacher cuts a link $\lr{w,v}$ from Learner's current process $S$, Learner should start from $w$ with the new path $S|_{\lr{w,v}}$ in the next round. Therefore, the desired operator should remember the history of Learner's movements.

To capture Teacher's action, a natural place to start is by defining operators that correspond to link addition and deletion. There is already a body of literature on logics of these modalities, such as the sabotage operator $\lr{-}$ and the bridge operator $\lr{+}$ \cite{changeoperator}. As mentioned, each occurrence of $\lr{-}$ in a formula deletes exactly one link whereas the bridge operator \textit{adds} links stepwise to models. Yet, including these two modalities is still not enough. For instance, we need to take into account whether or not a link deleted by Teacher is a part of the path of Learner's movements. We now introduce SLL. First, let us define its language.

\begin{definition}[Language $\mathcal{L}$]\label{def-language}
Let {\rm{\textbf{P}}} be a countable set of propositional atoms. The formulas of $\mathcal{L}$ are recursively defined in the following way:
$$\varphi::=p\mid\neg\varphi\mid(\varphi\land\varphi)\mid\blacklozenge\varphi\mid\lr{-}_1\varphi\mid\lr{-}_2\varphi\mid\lr{+}\varphi$$
where $p\in$\;{\rm{\textbf{P}}}. Notions $\top$, $\bot$, $\lor$ and $\to$ are as usual. Besides, we use $\blacksquare, [-]_1, [-]_2$ and $[+]$ to denote the dual operators of $\blacklozenge$, $\lr{-}_1$, $\lr{-}_2$ and $\lr{+}$ respectively.
\end{definition}

Several fragments of $\mathcal{L}$ will be studied in the following of the article. For brevity, we use a notational convention listing in subscript all modalities of the corresponding language. For instance, $\mathcal{L}_{\blacklozenge}$ is the fragment of $\mathcal{L}$ that has only the operator $\blacklozenge$ (besides Boolean connectives $\neg$ and $\land$); $\mathcal{L}_{\lr{-}_2}$ has only the modality $\lr{-}_2$; $\mathcal{L}_{\blacklozenge\lr{-}_1}$ has only the modality $\blacklozenge$ and $\lr{-}_1$, etc. We now proceed to define the models of SLL.

\begin{definition}[Models, Pointed Models and Frames]\label{def-model}
A model of SLL is a tuple $\mathcal{M}=\lr{W,R_1,R_2,V}$, where $W$ is a non-empty set of possible worlds, $R_{i\in\{1,2\}}\subseteq W^2$ are two binary relations and 
$V:\mathbf{P}\to2^W$ is a valuation function. $\mathcal{F}=\lr{W,R_1,R_2}$ is a frame. Let $S$ be an $R_1$-sequence, i.e., $Set(S)\subseteq R_1$. We name $\lr{\mathcal{M},S}$ a pointed model, and $S$ an evaluation sequence. Usually we write $\mathcal{M},S$ instead of $\lr{\mathcal{M},S}$.
\end{definition} 

For brevity, we call $R_1$ the \textit{conjectured relation} and $R_2$ the \textit{correct relation}. Besides, we use $\mathfrak{M}$ to denote the class of pointed models and $\mathfrak{M}^\bullet$ the class of pointed models whose sequence $S$ is a singleton. Before introducing the semantics, let us define some preliminary notations. 

Assume that $\mathcal{M}=\lr{W,R_1,R_2,V}$ is a model, $w\in W$ and $i\in\{1,2\}$. We use $R_i(w):=\{v\in W|R_iwv\}$ to denote the set of $R_i$-successors of $w$ in $\mathcal{M}$. Besides, for a sequence $S$, define $R_i(S):=R_i(e(S))$, i.e., the $R_i$-successors of a sequence $S$ are exactly the $R_i$-successors of its last element. For brevity, we also use $S;v$ to denote the sequence extending $S$ with node $v$. Moreover, $\mathcal{M}\ominus\lr{u,v}:=\lr{W,R_1\setminus\{\lr{u,v}\},R_2,V}$ is the model obtained by removing $\lr{u,v}$ from $R_1$, and $\mathcal{M}\oplus\lr{u,v}:=\lr{W,R_1\cup\{\lr{u,v}\},R_2,V}$ is obtained by extending $R_1$ in $\mathcal{M}$ with $\lr{u,v}$. We now have enough background to introduce the semantics of SLL.

\begin{definition}[Semantics]\label{def-semantics} 
Let $\lr{\mathcal{M},S}$ be a pointed model and $\varphi\in\mathcal{L}$. The semantics of SLL is defined as follows: 

\begin{center}
\begin{tabular}{|rcl|}
\hline
$\mathcal{M},S\vDash p$ & $\Leftrightarrow$ & $e(S)\in V(p)$ \\
$\mathcal{M},S\vDash\neg\varphi$ & $\Leftrightarrow$ & $ \mathcal{M},S \not\vDash\varphi$\\
$\mathcal{M},S\vDash\varphi\land\psi$ & $\Leftrightarrow$ & $\mathcal{M},S\vDash\varphi$ and $\mathcal{M},S\vDash\psi$\\
$\mathcal{M},S\vDash\blacklozenge\varphi$ & $\Leftrightarrow$ & $\exists v\in W$ s.t. $R_1e(S)v$ and $\mathcal{M},S;v\vDash\varphi$\\
$\mathcal{M},S\vDash\lr{-}_1\varphi$ & $\Leftrightarrow$ & $\exists\lr{v,v'}\in Set (S)\setminus R_2$ s.t. $\mathcal{M}\ominus\lr{v,v'},S|_{\lr{v,v'}}\vDash\varphi$ \\
$\mathcal{M},S\vDash\lr{-}_2\varphi$ & $\Leftrightarrow$ & $\exists\lr{v,v'}\in (R_1\setminus R_2)\setminus Set (S)$ s.t. $\mathcal{M}\ominus\lr{v,v'},S\vDash\varphi$ \\
$\mathcal{M},S\vDash\lr{+}\varphi$ & $\Leftrightarrow$ & $\exists\lr{v,v'}\in R_2\setminus R_1$ s.t. $\mathcal{M}\oplus\lr{v,v'},S\vDash\varphi$\\
\hline
\end{tabular}
\end{center}
\end{definition}

By the semantics, a propositional atom $p$ is true at a sequence $S$ if and only if $p$ is true at the last element of $S$. The cases for $\neg$ and $\land$ are as usual. Formula $\blacklozenge\varphi$ states that $e(S)$ has an $R_1$-successor $v$ such that $\varphi$ is true at sequence $S;v$. Additionally, $\lr{-}_1\varphi$ means that after deleting a link $\lr{v,v'}$ from $Set(S)\setminus R_2$, $\varphi$ is true at $S|_{\lr{v,v'}}$. Moreover, $\lr{-}_2\varphi$ states that $\varphi$ holds at $S$ after cutting a link $\lr{v,v'}$ belonging to $(R_1\setminus R_2)\setminus Set(S)$. Both $\lr{-}_1$ and $\lr{-}_2$ require that the link deleted cannot be an $R_2$-edge. Intuitively, whereas $\lr{-}_1$ depicts the case when Teacher deletes a link from Learner's path $S$, $\lr{-}_2$ captures the situation that the link deleted is not a part of $S$. Finally, $\lr{+}\varphi$ means that after extending $R_1$ with a new link belonging to $R_2$, $\varphi$ holds at the current sequence. 

A formula $\varphi$ is \textit{satisfiable} if there is a pointed model $\lr{\mathcal{M},S}\in\mathfrak{M}$ with $\mathcal{M},S\vDash\varphi$. \textit{Validity} in a model and in a frame is defined in the usual way. Note that the relevant class of models to specify SLL is $\mathfrak{M}^\bullet$, that is, models where the evaluation sequence $S$ starts with a singleton. Hence SLL is the set of $\mathcal{L}$-formulas that are valid in the class $\mathfrak{M}^\bullet$. 

For any $\lr{\mathcal{M},S}$ and $\lr{\mathcal{M}',S'}$, we say that they are \textit{learning modal equivalent} (notation: $\lr{\mathcal{M},S}\leftrightsquigarrow_l\lr{\mathcal{M}',S'}$) iff $\mathcal{M},S\vDash\varphi\Leftrightarrow\mathcal{M}',S'\vDash\varphi$ for any $\varphi\in\mathcal{L}$. The set $\mathbb{T}^l(\mathcal{M},S):=\{\varphi\in\mathcal{L}\mid\mathcal{M},S\vDash\varphi\}$ is the \textit{learning modal theory} of $\lr{\mathcal{M},S}$. Besides, we define a relation $\bf{U}\subseteq \mathfrak{M}\times\mathfrak{M}$ with $\lr{\lr{\mathcal{M},S},\lr{\mathcal{M}',S'}}\in \bf{U}$ iff $\lr{\mathcal{M}',S'}$ is $\lr{\mathcal{M}\ominus\lr{v,v'},S|_{\lr{v,v'}}}$ for some $\lr{v,v'}\in Set(S)\setminus R_2$, $\lr{\mathcal{M}\ominus\lr{v,v'},S}$ for some $\lr{v,v'}\in (R_1\setminus R_2)\setminus Set(S) $, or  $\lr{\mathcal{M}\oplus\lr{v,v'},S}$ for some $\lr{v,v'}\in R_2\setminus R_1$. We can also iterate this order, to talk about models reachable in finitely many $\bf{U}$-steps, obtaining the relation $\bf{U}^*$. 

\subsection{Application: Winning Strategies in SLG}\label{subsec:application}

By Definition \ref{def-semantics},  $\blacklozenge$ captures the actions of Learner, and operators $\lr{+}$, $\lr{-}_1$ and $\lr{-}_2$ characterize those of Teacher. Besides, our logic is expressive enough to describe the winning strategy (if there is one) for players in finite graphs.\footnote{Generally speaking, to define the existence of winning strategies for players, we need to extend SLG with some fixpoint operators. We leave this for future inquiry.}

Given a finite SLG, let $p$ be a distinguished atom holding only at the goal node. Generally, the winning strategy of Learner and Teacher will be of the following form:
\begin{align}
\blacksquare \bigcirc_{0} \blacksquare \bigcirc_{1} \blacksquare \cdots \bigcirc_{n} \blacksquare (p \wedge [-]_{1} \bot)
\end{align}
where $\bigcirc_{i}$ is blank or one of $\lr{-}_{1}$, $\lr{-}_{2}$ and $\lr{+}$ for each $i\le n$. In this formula, the recurring $\blacksquare$ operator depicts Learner's actions and $\bigcirc_{i}$ Teacher's response. The proposition $p$ signalizes Learner's arrival at the goal, and $[-]_{1} \bot$ states that there are no edges in Learner's path that Teacher can cut. Hence, we can conclude that Learner has reached the goal through a sequence of correct edges. It is worth noting that in formula (1) we use $\blacksquare$, other than $\blacklozenge$, to represent Learner's action, although SLG is a cooperative game. Recall the graph in Figure \ref{figure:clg}. We observe that $\blacksquare \lr{+}\blacksquare \lr{+} \blacksquare\lr{-}_1 \blacksquare\lr{-}_2\blacksquare\blacksquare(p\wedge [-]_{1} \bot)$ holds at the starting node $a$, so there exists a winning strategy in this specific SLG.

\begin{remark}
In SG we know that links cut by Teacher represent wrong inferences. However, SG does not tell us anything about the links that remain in the graph. Therefore, winning strategies of the players in SG cannot guarantee against situations like Gettier cases. In contrast, the formula $[-]_{1}\bot$ in (1) ensures that Teacher is not allowed to remove any more links from Learner's path. In SLG, a Gettier-style case is that Learner arrives at the goal node with some $\lr{u,v}\in R_1\setminus R_2$ occurring in her path, so Teacher now would be allowed to cut those links. Therefore Gettier cases cannot be winning strategies in SLG. 
\end{remark} 

\subsection{Preliminary Observations}\label{subsec:observations}
As observed, the semantics of SLL is not simple. In this section, we make some preliminary observations on SLL. In particular, we will discuss the relations between SLL and other related logics, and present some logical validities.  
 
First of all, we have the following result on the relation between $\mathcal{L}_{\blacklozenge}$ and standard modal logic:
 
\begin{proposition}\label{proposition:standard modal logic}
Let $\mathcal{M}=\lr{W,R_1,R_2,V}$ be a model. For any $\lr{\mathcal{M},S}\in\mathfrak{M}$ and $\varphi\in\mathcal{L}_{\blacklozenge}$, we have
$\mathcal{M},S\vDash\varphi\Leftrightarrow \lr{W,R_1,V},e(S)\vDash\varphi^*$, where $\varphi^*$ is a standard modal formula obtained by replacing every occurrence of $\blacklozenge$ in $\varphi$ with $\Diamond$.
\end{proposition}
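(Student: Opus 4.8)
The plan is to prove this by induction on the structure of $\varphi \in \mathcal{L}_{\blacklozenge}$, showing simultaneously that for every $\mathcal{L}_{\blacklozenge}$-formula and every pointed model $\lr{\mathcal{M},S} \in \mathfrak{M}$ we have $\mathcal{M},S \vDash \varphi$ iff $\lr{W,R_1,V}, e(S) \vDash \varphi^*$. The crucial observation, which I would state as the guiding intuition, is that none of the operators available in $\mathcal{L}_{\blacklozenge}$ (namely $\blacklozenge$ and the Booleans) ever modify the model $\mathcal{M}$ — only the evaluation sequence $S$ changes, and it changes only by extension via $\blacklozenge$. Moreover, the truth of an $\mathcal{L}_{\blacklozenge}$-formula at $\lr{\mathcal{M},S}$ depends on $S$ only through its last element $e(S)$, since the atomic clause reads $e(S) \in V(p)$ and the $\blacklozenge$-clause quantifies over $R_1$-successors of $e(S)$ and passes to $S;v$, whose last element is $v$.

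Concretely, the induction proceeds as follows. For the base case $\varphi = p$: $\mathcal{M},S \vDash p$ iff $e(S) \in V(p)$ iff $\lr{W,R_1,V}, e(S) \vDash p$, and $p^* = p$. The Boolean cases $\neg\psi$ and $\psi_1 \land \psi_2$ are immediate from the induction hypothesis, using that $(\neg\psi)^* = \neg\psi^*$ and $(\psi_1 \land \psi_2)^* = \psi_1^* \land \psi_2^*$. For the modal case $\varphi = \blacklozenge\psi$: by the semantics, $\mathcal{M},S \vDash \blacklozenge\psi$ iff there is $v \in W$ with $R_1 e(S) v$ and $\mathcal{M}, S;v \vDash \psi$. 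Note that $S;v$ is again an $R_1$-sequence (since $S$ is and $R_1 e(S) v$), so $\lr{\mathcal{M}, S;v} \in \mathfrak{M}$ and the induction hypothesis applies to it: $\mathcal{M}, S;v \vDash \psi$ iff $\lr{W,R_1,V}, e(S;v) \vDash \psi^*$, and $e(S;v) = v$. Hence $\mathcal{M},S \vDash \blacklozenge\psi$ iff there is $v$ with $R_1 e(S) v$ and $\lr{W,R_1,V}, v \vDash \psi^*$, which is exactly the standard semantics of $\Diamond\psi^* = (\blacklozenge\psi)^*$ at $e(S)$ in $\lr{W,R_1,V}$.

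There is essentially no hard part here; the result is a routine structural induction, and its only real content is the bookkeeping insight that $\mathcal{L}_{\blacklozenge}$ truth is insensitive to everything about the pointed model except $R_1$, $V$, and the endpoint $e(S)$ — in particular it ignores $R_2$ and the non-terminal portion of $S$. The one point that needs a line of care is verifying that $S;v$ remains a legitimate evaluation sequence (i.e.\ an $R_1$-sequence) so that the induction hypothesis is applicable to the sub-model reached by $\blacklozenge$; this is precisely where the side condition $R_1 e(S) v$ in the $\blacklozenge$-clause is used. Once that is noted, the induction closes immediately. The significance of the proposition, worth flagging after the proof, is that it pins down $\mathcal{L}_{\blacklozenge}$ as a notational variant of basic modal logic over the reduct $\lr{W,R_1,V}$, so that known results about basic modal logic — compactness failure notwithstanding, e.g.\ bisimulation invariance, the tree model property, decidability of satisfiability — transfer to $\mathcal{L}_{\blacklozenge}$, isolating exactly which complications in the full logic SLL are genuinely due to the Teacher-modalities $\lr{-}_1$, $\lr{-}_2$, $\lr{+}$.
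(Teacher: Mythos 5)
Your proof is correct and follows essentially the same route as the paper's: a structural induction in which the Boolean cases are immediate and the $\blacklozenge$-case unfolds the semantics, applies the induction hypothesis to $\lr{\mathcal{M},S;v}$, and uses $e(S;v)=v$ to match the standard $\Diamond$-clause. Your additional remarks (that $S;v$ stays an $R_1$-sequence and that truth of $\mathcal{L}_{\blacklozenge}$-formulas depends only on $R_1$, $V$ and $e(S)$) are accurate refinements of the same argument, not a different approach.
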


\begin{proof}
The proof is done by induction on the syntax of $\varphi$. The Boolean cases are trivial. When $\varphi$ is $\blacklozenge\psi$, it holds that:
\begin{center}
\begin{tabular}{lll}
$\mathcal{M},S\vDash\varphi$&$\Leftrightarrow$& $\exists v\in R_1(e(S))$ s.t. $\mathcal{M},S;v\vDash\psi$\\
&$\Leftrightarrow$& $\exists v\in R_1(e(S))$ s.t. $\mathcal{M},v\vDash \psi^*$\\
&$\Leftrightarrow$& $\lr{W,R_1,V},e(S)\vDash \varphi^*$ 
\end{tabular}
\end{center}
The first equivalence follows from Definition \ref{def-semantics} directly. By the inductive hypothesis, the second one holds. The last one holds by the semantics of standard modal logic.
\qed
\end{proof}

Therefore, essentially the fragment $\mathcal{L}_{\blacklozenge}$ of $\mathcal{L}$ is standard modal logic. Moreover, the operator $\lr{-}_2$ is much similar to the sabotage operator $\lr{-}$:

\begin{proposition}\label{proposition:sabotage logic}
Let $\mathcal{M}=\lr{W,R_1,R_2,V}$ be a model, and $R=R_1\setminus R_2$. For any $\lr{\mathcal{M},w}\in\mathfrak{M}^\bullet$ and $\varphi\in\mathcal{L}_{\lr{-}_2}$, we have
$\mathcal{M},w\vDash\varphi\Leftrightarrow\lr{W,R,V},w\vDash\varphi'$, where $\varphi'$ is a SML formula obtained by replacing every occurrence of $\lr{-}_2$ in $\varphi$ with $\lr{-}$.
\end{proposition}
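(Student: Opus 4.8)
The plan is to follow the same template as the proof of Proposition~\ref{proposition:standard modal logic}: argue by induction on the structure of $\varphi\in\mathcal{L}_{\lr{-}_2}$, reading the statement itself as the induction hypothesis so that it quantifies over \emph{all} models $\mathcal{M}=\lr{W,R_1,R_2,V}$ and all worlds $w$ at once. That is, I would show: for every such $\mathcal{M}$, every $w\in W$, and every $\varphi\in\mathcal{L}_{\lr{-}_2}$, $\mathcal{M},w\vDash\varphi$ iff $\lr{W,R_1\setminus R_2,V},w\vDash\varphi'$, where $\varphi'$ replaces each $\lr{-}_2$ by $\lr{-}$.

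The base case $\varphi=p$ is immediate: $e(\lr{w})=w$ and $\mathcal{M}$ shares its valuation $V$ with the reduct $\lr{W,R_1\setminus R_2,V}$, so both sides assert $w\in V(p)$. The Boolean cases are handled by the induction hypothesis exactly as before. The only case carrying content is $\varphi=\lr{-}_2\psi$, and it rests on three small facts. First, since the evaluation sequence is the singleton $\lr{w}$, we have $Set(\lr{w})=\emptyset$, so the side condition $\lr{v,v'}\in(R_1\setminus R_2)\setminus Set(S)$ in Definition~\ref{def-semantics} reduces to $\lr{v,v'}\in R_1\setminus R_2=R$ --- precisely the edges over which the sabotage modality $\lr{-}$ quantifies in $\lr{W,R,V}$. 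Second, applying $\lr{-}_2$ does not change the evaluation sequence, so $\lr{\mathcal{M}\ominus\lr{v,v'},w}$ is again a pointed model in $\mathfrak{M}^\bullet$ and the induction hypothesis applies to it. Third, deletion commutes with taking the reduct: as $\lr{v,v'}\in R_1\setminus R_2$, we get $(R_1\setminus\{\lr{v,v'}\})\setminus R_2=R\setminus\{\lr{v,v'}\}$, so the reduct of $\mathcal{M}\ominus\lr{v,v'}$ is exactly $\lr{W,R\setminus\{\lr{v,v'}\},V}$. Chaining these, $\mathcal{M},w\vDash\lr{-}_2\psi$ iff there is $\lr{v,v'}\in R$ with $\mathcal{M}\ominus\lr{v,v'},w\vDash\psi$, iff (induction hypothesis) there is $\lr{v,v'}\in R$ with $\lr{W,R\setminus\{\lr{v,v'}\},V},w\vDash\psi'$, iff $\lr{W,R,V},w\vDash\lr{-}\psi'=(\lr{-}_2\psi)'$.

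I do not expect a genuine obstacle; the argument is essentially bookkeeping once the reduct is set up. The single point that needs care is \emph{why} the restriction to $\mathfrak{M}^\bullet$ in the statement is both necessary and sufficient for the induction to close: for a non-singleton sequence $S$ the $\lr{-}_2$ clause forbids cutting edges lying in $Set(S)$, whereas SML's sabotage operator obeys no such constraint, so the correspondence would break; but because $\lr{-}_2$ never extends the evaluation sequence (unlike $\blacklozenge$), the invariant ``$S$ is a singleton'' is preserved all the way down the induction, keeping us in lockstep with the static, $R_2$-free reduct $\lr{W,R,V}$.
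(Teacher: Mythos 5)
Your proposal is correct and follows essentially the same route as the paper's proof: induction on $\varphi$ with the strengthened hypothesis quantifying over all models, where the $\lr{-}_2$ case reduces to noting that $Set(\lr{w})=\emptyset$ collapses the side condition to $\lr{v,v'}\in R_1\setminus R_2=R$ and that deletion commutes with taking the reduct. Your write-up is in fact more explicit than the paper's about the two invariants that make the induction close (the evaluation sequence stays a singleton, and $(R_1\setminus\{\lr{v,v'}\})\setminus R_2=R\setminus\{\lr{v,v'}\}$), but these are exactly the facts the paper's proof is implicitly relying on.
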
 

\begin{proof}
We prove it by induction on the structure of $\varphi$. The Boolean cases are straightforward. When $\varphi$ is $\lr{-}_2\psi$, it holds that:
\begin{center}
\begin{tabular}{lll}
$\mathcal{M},w\vDash\varphi$ &$\Leftrightarrow$& $\exists \lr{v,v'}\in (R_1\setminus R_2)$ s.t. $\mathcal{M}\ominus\lr{v,v'},w\vDash\psi$\\
&$\Leftrightarrow$& $\exists \lr{v,v'}\in R$ s.t. $\lr{W,R\setminus\{\lr{v,v'}\},V},w\vDash \psi'$\\
&$\Leftrightarrow$& $\lr{W,R,V},w\vDash\lr{-}\psi'$ 
\end{tabular}
\end{center}
The first equivalence follows straightforward from the semantics of SLL. By the inductive hypothesis and the definition of $R$, we have the second equivalence. The last one holds by the truth condition for the sabotage modality.
\qed
\end{proof}

Next, we have the following result on the relation between $\mathcal{L}_{\blacklozenge\lr{+}}$ and the `\textit{bridge modal logic (BML)}' (i.e., the logic expanding the standard modal logic with the bridge operator):

\begin{proposition}\label{proposition:bridge logic}
Let $\mathcal{M}=\lr{W,R_1,W^2,V}$ be a model. For any $\lr{\mathcal{M},S}\in\mathfrak{M}$ and $\varphi\in\mathcal{L}_{\blacklozenge\lr{+}}$, we have $\mathcal{M},S\vDash\varphi\Leftrightarrow\lr{W,R_1,V},e(S)\vDash\varphi^\star$, where $\varphi^\star$ is a bridge modal formula obtained by replacing every occurrence of $\blacklozenge$ in $\varphi$ with $\Diamond$.\footnote{By abuse of notation, for any $\varphi\in\mathcal{L}_{\blacklozenge\lr{+}}$, $\varphi^\star$ is a formula of the bridge modal logic.}
\end{proposition}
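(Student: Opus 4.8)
The plan is to mirror the inductions used for Propositions~\ref{proposition:standard modal logic} and \ref{proposition:sabotage logic}, this time on the structure of $\varphi\in\mathcal{L}_{\blacklozenge\lr{+}}$, keeping track of the evaluation sequence only through its last element $e(S)$. The atomic and Boolean cases are immediate: $\mathcal{M},S\vDash p$ iff $e(S)\in V(p)$ iff $\lr{W,R_1,V},e(S)\vDash p^\star$, and $\neg,\land$ commute with the translation in the obvious way. The $\blacklozenge$ case is handled exactly as in Proposition~\ref{proposition:standard modal logic}: if $\varphi=\blacklozenge\psi$, then $\mathcal{M},S\vDash\varphi$ iff there is $v\in R_1(e(S))$ with $\mathcal{M},S;v\vDash\psi$; here $S;v$ is still an $R_1$-sequence and $\mathcal{M}$ still has correct relation $W^2$, so the inductive hypothesis applies and gives $\lr{W,R_1,V},v\vDash\psi^\star$, since $e(S;v)=v$. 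Re-quantifying over the $R_1$-successors of $e(S)$ yields the standard truth condition for $\Diamond\psi^\star$, i.e. for $\varphi^\star$, at $e(S)$.

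The only genuinely new case is $\varphi=\lr{+}\psi$. I would first unfold the semantics: $\mathcal{M},S\vDash\lr{+}\psi$ iff there is $\lr{v,v'}\in R_2\setminus R_1$ with $\mathcal{M}\oplus\lr{v,v'},S\vDash\psi$. Since by hypothesis $R_2=W^2$, this quantifies over exactly those pairs $\lr{v,v'}\notin R_1$, i.e. over precisely the edges the bridge operator may insert. The key bookkeeping point is that $\mathcal{M}\oplus\lr{v,v'}=\lr{W,R_1\cup\{\lr{v,v'}\},W^2,V}$ still has its correct relation equal to $W^2$, and that $R_1$ only grows, so $S$ remains an $R_1$-sequence and $\lr{\mathcal{M}\oplus\lr{v,v'},S}\in\mathfrak{M}$; hence the inductive hypothesis applies to it and $\psi$, giving $\mathcal{M}\oplus\lr{v,v'},S\vDash\psi$ iff $\lr{W,R_1\cup\{\lr{v,v'}\},V},e(S)\vDash\psi^\star$. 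Re-quantifying over $\lr{v,v'}\notin R_1$ on the right-hand side is exactly the BML truth condition for $\lr{+}\psi^\star$ at $e(S)$, which by the translation clause is $\varphi^\star$ evaluated there. Chaining the equivalences closes the case and hence the induction.

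As for where the (mild) difficulty lies: there is no real computational obstacle, but the argument works only because the hypothesis pins $R_2=W^2$. This is what makes the SLL operator $\lr{+}$ coincide with the bridge operator (which adds a new edge anywhere in the model, not just from the evaluation point), and what makes the relevant class of models closed under $\oplus$, so the induction can be carried through uniformly. I would therefore make this closure observation explicit as a one-line remark before the induction and be careful to recall the precise semantics of the bridge modality from \cite{changeoperator} that is intended; everything else is a routine unfolding of definitions, as in the two preceding propositions.
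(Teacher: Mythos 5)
Your proposal is correct and follows essentially the same induction as the paper's own proof: unfold the semantics of $\lr{+}$, use $R_2=W^2$ to identify the quantification over $R_2\setminus R_1$ with quantification over pairs not in $R_1$, apply the inductive hypothesis, and recognize the BML truth condition. Your explicit remark that the model class with $R_2=W^2$ is closed under $\oplus$ (so the inductive hypothesis remains applicable) is a point the paper leaves implicit, but it does not change the argument.
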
 

\begin{proof}
This goes by induction on the syntax of $\varphi$. The Boolean cases are trivial. The case for $\blacklozenge$ is similar to the proof of Proposition \ref{proposition:standard modal logic}. When $\varphi$ is $\lr{+}\psi$, it holds that:  
\begin{center}
\begin{tabular}{lll}
$\mathcal{M},S\vDash\varphi$
&$\Leftrightarrow$& $\exists \lr{v,v'}\in (R_2\setminus R_1)$ s.t. $\mathcal{M}\oplus\lr{v,v'},S\vDash\psi$\\
&$\Leftrightarrow$& $\exists v,v'\in W$ s.t. $\lr{v,v'}\not\in R_1$ and $\lr{W,R_1\cup\{\lr{v,v'}\},V},e(S)\vDash \psi'$\\
&$\Leftrightarrow$& $\lr{W,R_1,V},e(S)\vDash\lr{+}\psi'$ 
\end{tabular}
\end{center}
The first equivalence follows our semantics. By the inductive hypothesis and the definition of $R_2$, the second one holds. The last one holds by the semantics of bridge modal logic.
\qed
\end{proof}

From Proposition \ref{proposition:standard modal logic}-\ref{proposition:bridge logic}, we know that several fragments of SLL are similar to other logics that have been studied. However, as a whole, SLL is not a loose aggregation of these fragments: different operators interact with each other. A typical example is that, for any $\lr{\mathcal{M},w}\in \mathfrak{M}^\bullet$, the formula 
\begin{equation}
 [-]_1\varphi   
\end{equation}
is valid, as $Set(w)=\emptyset$. However, formula $\blacklozenge\neg[-]_1\varphi$ is satisfiable. This presents a drastic difference between SLL and other logics mentioned so far: in those logics, it is impossible that the evaluation point has access to a node satisfying a contradiction. In order to understand how operators in SLL work, we present some other validities of SLL.

\begin{proposition}
Let $p\in\bf{P}$ and $\varphi,\psi\in\mathcal{L}$. The following formulas are validities of SLL (w.r.t. $\mathfrak{M}^\bullet$):
\begin{align}
&&&p\to\blacksquare[-]_1p&&  \\
&&&p\land\blacklozenge\top\to\blacklozenge[-]_1p&&&\\
&&&p \to \bigcirc p&& \bigcirc\in\{[-]_2,[+]\} \\
&&&\bigcirc(\varphi\to\psi)\to(\bigcirc\varphi\to\bigcirc\psi) &&\bigcirc\in\{[-]_2,[+]\}\\
&&&\blacksquare^n[-]_1(\varphi\to\psi)\to(\blacksquare^n[-]_1\varphi\to\blacksquare^n[-]_1\psi)&& n\in N\\
&&&\blacksquare^n\lr{-}_1\varphi\to\blacksquare^{n+m}\lr{-}_1\varphi& & n,m\in N \\
&&&\blacklozenge^n\lr{-}_1\varphi\to\bigvee\limits_{m<n}\blacklozenge^m\lr{-}_2\varphi && 1\le n\in N
\end{align}
\end{proposition}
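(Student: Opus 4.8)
The plan is to verify each validity by directly unwinding Definition \ref{def-semantics}, grouping the formulas into three kinds. The first two, $p\to\blacksquare[-]_1p$ and $p\land\blacklozenge\top\to\blacklozenge[-]_1p$, both rest on a single observation: when $S=\lr{w}$ is a singleton and $v\in R_1(w)$, we have $Set(\lr{w};v)=\{\lr{w,v}\}$, so the only edge $[-]_1$ could delete from $\lr{w};v$ is $\lr{w,v}$, and deleting it returns the sequence to $(\lr{w};v)|_{\lr{w,v}}=\lr{w}$. Hence for every $R_1$-successor $v$ of $w$ we have $\mathcal{M},\lr{w};v\vDash[-]_1p$: either $\lr{w,v}\in R_2$, in which case $[-]_1p$ holds vacuously, or $\lr{w,v}\notin R_2$, in which case the deletion lands us at $\lr{w}$, where $p$ holds since atoms depend only on the last element. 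Quantifying over all successors gives the $\blacksquare$-version; taking the successor supplied by $\blacklozenge\top$ gives the $\blacklozenge$-version.

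Next come the ``normality'' validities: $p\to\bigcirc p$, the $K$-schema $\bigcirc(\varphi\to\psi)\to(\bigcirc\varphi\to\bigcirc\psi)$ for $\bigcirc\in\{[-]_2,[+]\}$, and $\blacksquare^n[-]_1(\varphi\to\psi)\to(\blacksquare^n[-]_1\varphi\to\blacksquare^n[-]_1\psi)$. For $p\to\bigcirc p$ I would note that $[-]_2$ and $[+]$ alter only $R_1$, leaving $V$ and $e(S)$ fixed (and the set they range over may be empty), so the truth value of the atom is preserved. For the $K$-schemata I would observe that each of $\blacklozenge$ (hence $\blacksquare$), $[-]_1$, $[-]_2$, $[+]$ is a box in the Kripke sense: $\mathcal{M},S\vDash\bigcirc\chi$ iff $\chi$ holds at every pointed model in a set $X_{\mathcal{M},S}$ determined by $\mathcal{M}$ and $S$. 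Consequently any composition of such operators, in particular $\blacksquare^n[-]_1$, is again a universal quantifier over a set of pointed models, and every such operator validates $K$ (including the degenerate case $n=0$, where the set may be empty); the two schemata are then immediate instances.

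The remaining two formulas exploit the prefix structure of evaluation sequences. For $\blacksquare^n\lr{-}_1\varphi\to\blacksquare^{n+m}\lr{-}_1\varphi$ I would first dispose of $n=0$, where the antecedent is false at every singleton since $Set(\lr{w})=\emptyset$; then, assuming $\mathcal{M},\lr{w}\vDash\blacksquare^n\lr{-}_1\varphi$ with $n\ge 1$, I take an arbitrary $R_1$-path $S'=\lr{w_0,\dots,w_{n+m}}$ from $w$. Its length-$n$ prefix $S=\lr{w_0,\dots,w_n}$ satisfies $\lr{-}_1\varphi$, so there is $\lr{v,v'}\in Set(S)\setminus R_2$ with $\mathcal{M}\ominus\lr{v,v'},S|_{\lr{v,v'}}\vDash\varphi$. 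Since $S$ is a prefix of $S'$ we have $Set(S)\subseteq Set(S')$, the first occurrence of $\lr{v,v'}$ in $S'$ already lies inside $S$, so $S'|_{\lr{v,v'}}=S|_{\lr{v,v'}}$, and therefore $\lr{v,v'}$ also witnesses $\mathcal{M},S'\vDash\lr{-}_1\varphi$. As $S'$ was arbitrary, $\mathcal{M},\lr{w}\vDash\blacksquare^{n+m}\lr{-}_1\varphi$.

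Finally, for $\blacklozenge^n\lr{-}_1\varphi\to\bigvee_{m<n}\blacklozenge^m\lr{-}_2\varphi$ with $n\ge 1$, suppose $\mathcal{M},\lr{w}\vDash\blacklozenge^n\lr{-}_1\varphi$; unwinding $\blacklozenge^n$ yields an $R_1$-path $S=\lr{w_0,\dots,w_n}$ from $w$ with $\mathcal{M},S\vDash\lr{-}_1\varphi$, hence a witness $\lr{v,v'}\in Set(S)\setminus R_2$ whose first occurrence in $S$ is an edge $\lr{w_u,w_{u+1}}$ with $0\le u\le n-1$, so that $S|_{\lr{v,v'}}=\lr{w_0,\dots,w_u}=:S_u$ and $\mathcal{M}\ominus\lr{v,v'},S_u\vDash\varphi$. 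Because $\lr{w_u,w_{u+1}}$ is the \emph{first} occurrence, $\lr{v,v'}\notin Set(S_u)$, so $\lr{v,v'}\in(R_1\setminus R_2)\setminus Set(S_u)$ and the same edge witnesses $\mathcal{M},S_u\vDash\lr{-}_2\varphi$; traversing the $R_1$-edges $\lr{w_0,w_1},\dots,\lr{w_{u-1},w_u}$ then gives $\mathcal{M},\lr{w}\vDash\blacklozenge^u\lr{-}_2\varphi$, and since $u<n$ the disjunction holds. I expect the main obstacle to be exactly the bookkeeping in these last two cases: keeping ``first occurrence'' consistent so that $S'|_{\lr{v,v'}}=S|_{\lr{v,v'}}$ and $\lr{v,v'}\notin Set(S_u)$, confirming that the deleted edge stays eligible for $\lr{-}_2$, and checking the strict inequality $u<n$; everything else is routine unwinding of the semantics.
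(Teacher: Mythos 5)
Your proof is correct. The paper states this proposition without giving any proof, so there is nothing to compare against; your direct semantic verification is exactly the argument one would expect, and you handle the two genuinely delicate points properly --- namely that a prefix $S$ of $S'$ forces $S'|_{\lr{v,v'}}=S|_{\lr{v,v'}}$ for any $\lr{v,v'}\in Set(S)$ in formula (8), and that the first-occurrence convention guarantees $\lr{v,v'}\notin Set(S|_{\lr{v,v'}})$ so the deleted edge re-qualifies as a $\lr{-}_2$-witness in formula (9).
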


Note that formulas (3)-(5) above are not schemata. Although they will still be valid if we replace each propositional atom occurring in them with any Boolean formula, substitution fails in the general case. In particular, we have the following result:

\begin{proposition}\label{proposition:substitution}
Validities of $\mathcal{L}_{\blacklozenge\lr{-}_1}$ are not closed under substitution.
\end{proposition}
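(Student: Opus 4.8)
The plan is to exhibit a single formula $\varphi \in \mathcal{L}_{\blacklozenge\lr{-}_1}$ that is valid (w.r.t. $\mathfrak{M}^\bullet$) together with a substitution instance of it that is not valid. The natural candidate comes from validity (3) in the previous proposition, namely $p \to \blacksquare[-]_1 p$. This is valid precisely because at a pointed model $\lr{\mathcal{M},w} \in \mathfrak{M}^\bullet$ the evaluation sequence is a singleton, so after one $\blacksquare$-step we are at a sequence $S;v$ of length two, and $[-]_1$ then ranges over links in $Set(S;v) \setminus R_2 = \{\lr{w,v}\} \setminus R_2$; deleting $\lr{w,v}$ (if it is $R_1$-but-not-$R_2$) sends us back to the sequence $\lr{w}$, where $p$ still holds since $p$ held at $w$. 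The point is that this argument uses crucially that $p$ is a \emph{propositional atom}, whose truth value at $w$ is unaffected by edge deletions.

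First I would replace $p$ by a modal formula $\psi$ whose truth at $w$ genuinely depends on an $R_1$-edge out of $w$, so that deleting that edge falsifies $\psi$ at $w$. The cleanest choice is $\psi := \blacklozenge\top$ (or $\blacklozenge\blacksquare\bot$, etc.). Then I would build a pointed model $\lr{\mathcal{M},w} \in \mathfrak{M}^\bullet$ with a single $R_1$-edge $\lr{w,v}$, with $\lr{w,v} \notin R_2$, and with $v$ having no $R_1$-successors. Now I verify: $\mathcal{M},w \vDash \blacklozenge\top$ (since $v \in R_1(w)$); and $\mathcal{M},w \not\vDash \blacksquare[-]_1\blacklozenge\top$, because the only $\blacklozenge$-successor-sequence is $\lr{w,v}$, at which $[-]_1$ forces us to delete $\lr{w,v} \in Set(\lr{w,v})\setminus R_2$, returning to $\lr{w}$ in $\mathcal{M}\ominus\lr{w,v}$, where $w$ now has no $R_1$-successor, so $\blacklozenge\top$ fails. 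Hence the substitution instance $\blacklozenge\top \to \blacksquare[-]_1\blacklozenge\top$ of the validity $p \to \blacksquare[-]_1 p$ is refuted at $\lr{\mathcal{M},w} \in \mathfrak{M}^\bullet$, establishing that $\mathcal{L}_{\blacklozenge\lr{-}_1}$-validities are not closed under substitution.

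The only mild subtlety — and the main thing to be careful about — is making sure the chosen model really lies in $\mathfrak{M}^\bullet$ (evaluation sequence a singleton) and that all the side conditions in the semantics of $\lr{-}_1$ and $\blacklozenge$ are met: in particular that the edge $\lr{w,v}$ we want deleted is in $R_1$, is \emph{not} in $R_2$, and does occur in $Set(\lr{w,v})$. Taking $R_2 = \emptyset$ (or any $R_2$ omitting $\lr{w,v}$) handles this. One should also double-check that after the deletion we are back at the length-one sequence $\lr{w}$ and not at something longer, which is immediate from the definition of $S|_{\lr{v,v'}}$ since $\lr{w,v}$ is the first (indeed only) edge of the sequence. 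Beyond that the argument is a direct unwinding of Definition~\ref{def-semantics}, so there is no real obstacle; the content is entirely in recognizing that (3) is valid only because $p$ is atomic and then choosing $\blacklozenge\top$ as the separating witness.
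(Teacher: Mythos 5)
Your proof is correct and takes essentially the same approach as the paper: both exploit the fact that substituting for $p$ a formula whose truth at $w$ depends on the very $R_1\setminus R_2$ edge that $\lr{-}_1$ deletes breaks the validity, witnessed by a model with a single such edge out of $w$. The only cosmetic difference is that the paper starts from validity (4) ($p\land\blacklozenge\top\to\blacklozenge[-]_1p$) and substitutes $\blacklozenge p$, whereas you start from validity (3) and substitute $\blacklozenge\top$; both work.
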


\begin{proof}
We prove it by example. Consider the general schema $\varphi\land\blacklozenge\psi\to\blacklozenge[-]_1\varphi$ of formula (4). Let $\varphi:=\blacklozenge p$ and $\psi:=\blacksquare q$. Define a model $\mathcal{M}$ as depicted in Figure \ref{figure:validity}. It holds that $\mathcal{M},w\vDash \blacklozenge p\land\blacklozenge\blacksquare q$. While, since $w$ has exactly one successor $w_1$ and $\lr{w,w_1}\not\in R_2$, we have $\mathcal{M},w\not\vDash\blacklozenge[-]_1\blacklozenge p$.\qed
\end{proof}

\begin{figure}
\centering
\begin{tikzpicture}
\node(a)[circle,draw,inner sep=0pt,minimum size=5mm] at (0,0) {$w$};
\node(b) [circle,draw,inner sep=0pt,minimum size=5mm][label=below:$p$] at (1.5,0){$w_1$};
\node(c)[circle,draw,inner sep=0pt,minimum size=5mm] [label=below:$q$] at (3,0){$w_2$};
\draw[->](a) to node [above] {$1$} (b);
\draw[->] (b) to node [above] {$2$} (c);
\end{tikzpicture}
\caption{A model of SLL}
\label{figure:validity}
\end{figure}
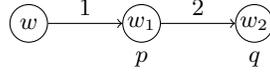

Interestingly, SLL also has other features that are very different from standard modal logic. For instance, it lacks the tree model property, which holds directly by the following result:

\begin{proposition}\label{proposition: tmp of fragement}
$\mathcal{L}_{\blacklozenge\lr{-}_1}$ does not have the tree model property.
\end{proposition}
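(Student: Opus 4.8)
The plan is to produce a single satisfiable formula $\varphi\in\mathcal{L}_{\blacklozenge\lr{-}_1}$ that has no tree model, i.e.\ such that in every $\lr{\mathcal{M},w}\in\mathfrak{M}^\bullet$ with $\mathcal{M},w\vDash\varphi$ the sub-frame of $R_1$ reachable from $w$ contains a cycle. Since in a tree model (rooted at the evaluation point, with $R_1$ acyclic) every $R_1$-sequence issuing from the root is simple, no such model can satisfy $\varphi$, and the proposition follows.

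The conceptual lever is that $\lr{-}_1$, unlike $\blacklozenge$, is not a local operator: its behaviour depends on the whole history $S$, and when it deletes an edge $\lr{v,v'}$ it both removes that edge globally and truncates $S$ to the first occurrence of $\lr{v,v'}$. Two consequences matter. First, along any evaluation sequence $S$ the modality $\lr{-}_1$ can be applied successively at most $|Set(S)\setminus R_2|$ times, since each application strictly decreases the number of distinct deletable edges still present on the current sequence. Second, if a sequence $S$ has been produced by $n$ consecutive $\blacklozenge$-steps along edges that all lie outside $R_2$, and nonetheless $\lr{-}_1$ cannot be iterated $n$ times from $S$, then by the first point fewer than $n$ of those $n$ traversed edges are distinct, so some edge was used twice --- a cycle. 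Hence a formula of the rough shape ``there is a $\blacklozenge$-path of length $n{+}1$, its edges lie outside $R_2$, and along it $\lr{-}_1$ runs at most $n$ times'' forces a reachable $R_1$-cycle, and is therefore refuted by every tree model. To see such a formula is satisfiable, one takes as $\mathcal{M}$ a directed $n$-cycle (or a lasso with a pendant goal vertex carrying a distinguished atom $p$) with $R_2=\emptyset$, evaluated at a vertex on the cycle, and checks the clauses of the semantics directly; here one must keep track of the fact that the sequences that occur are eventually non-simple, and that $\lr{-}_1$'s truncation then sends one back to the first occurrence of the deleted edge.

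Concretely I would (i) fix the cyclic model and compute its $\mathcal{L}_{\blacklozenge\lr{-}_1}$-theory up to the needed modal depth, extracting the candidate $\varphi$; (ii) prove $\mathcal{M},w\vDash\varphi$; and (iii) prove that no tree model satisfies $\varphi$ by assuming $\lr{T,r}$ is one, reading off from the ``$\blacklozenge$-path of length $n{+}1$'' conjunct a simple path $\lr{r,v_1,\ldots,v_{n+1}}$ in $T$ --- so its $n{+}1$ edges are pairwise distinct --- and deriving a contradiction from the ``edges outside $R_2$'' and ``$\lr{-}_1$ at most $n$ times'' conjuncts via the bound above. Note that this simultaneously rules out infinite trees, since it exhibits a genuine repeated edge rather than merely an infinite branch.

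The step I expect to be the main obstacle is the expressibility engineering inside (iii): on its own $\lr{-}_1$ only quantifies existentially over the deletable edges of the current sequence and cannot directly ``point at'' the last edge traversed or read off the length of $S$, so forcing ``every edge of the current length-$(n{+}1)$ $\blacklozenge$-path is outside $R_2$, and $\lr{-}_1$ is available at most $n$ times along it'' takes a careful combination of $\blacklozenge$, $\blacksquare$, $\lr{-}_1$, $[-]_1$ (and plausibly an auxiliary marker atom), together with a careful analysis of how the truncation $S|_{\lr{v,v'}}$ behaves on sequences that are not simple. Getting this formula to be true in the cycle and false in every tree at the same time --- and checking that the ``pigeonhole on distinct edges'' argument is genuinely valid in all models --- is where the real work of the proof lies.
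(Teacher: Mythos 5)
Your overall strategy is sound in outline, and your key counting observation is correct: each application of $\lr{-}_1$ deletes one edge and truncates $S$ to a prefix not containing it, so the set of deletable edges strictly shrinks, and hence $\lr{-}_1$ can be iterated from $S$ at most $|Set(S)\setminus R_2|$ times (with the bound attained by always deleting the edge whose first occurrence is latest). So a formula asserting ``$n{+}1$ consecutive $\blacklozenge$-steps along edges all outside $R_2$, yet $\lr{-}_1^{n+1}\top$ fails'' would indeed force a repeated $R_1$-edge and hence refute every tree model. The problem is that you never produce such a formula, and the part you defer is precisely the load-bearing step. Without the conjunct ``every traversed edge lies outside $R_2$'', the pigeonhole collapses: on a simple path of $n{+}1$ distinct edges, $\lr{-}_1^{n+1}\top$ already fails as soon as a single edge happens to lie in $R_2$, so a tree model could satisfy your formula after all. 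Expressing that $R_2$-avoidance condition inside $\mathcal{L}_{\blacklozenge\lr{-}_1}$ (where $\lr{-}_1$ only quantifies existentially over deletable edges and cannot name the last edge traversed) is exactly the expressibility engineering you flag as ``the main obstacle''; a proof that stops there has not established the proposition. Likewise, ``compute the theory of the cyclic model up to the needed depth and extract $\varphi$'' is not a construction.

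For comparison, the paper's proof is much more local and fully explicit: it uses the depth-two formula $\varphi_T=(p\land\blacklozenge p\land\blacklozenge\neg p)\land\blacksquare(p\to\blacklozenge p\land\blacklozenge\neg p)\land\blacksquare(\neg p\to\lr{-}_1(\blacksquare p\land\blacksquare\blacksquare p))$. Evaluated at a singleton, moving to a $\neg p$-successor $w_2$ makes $\lr{w,w_2}$ the unique deletable edge, and the postcondition $\blacksquare p\land\blacksquare\blacksquare p$ after its deletion forces the $\neg p$-successor of each $p$-successor of $w$ to be reached by that very same edge, yielding $R_1ww$; satisfiability is witnessed by a two-point model with a reflexive root. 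If you want to salvage your route, you should either carry out the marker-atom construction in full for a fixed small $n$, or notice that the truncation behaviour of $\lr{-}_1$ already lets you force edge identifications directly, which is what the paper's $T_3$ does without any counting.
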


\begin{proof}
Consider the following formulas: 
\begin{flalign*}
(T_1) \quad &  p\land\blacklozenge p\land\blacklozenge\neg p \\
(T_2) \quad & \blacksquare(p\to\blacklozenge p\land\blacklozenge\neg p) \\
(T_3) \quad & \blacksquare(\neg p\to\lr{-}_1(\blacksquare p\land\blacksquare\blacksquare p)) 
\end{flalign*}
Define $\varphi_T:=(T_1\land T_2\land T_3)$. We now show that, for any model $\mathcal{M}=\lr{W,R_1,R_2,V}$ and $w\in W$, if $\mathcal{M},w\vDash\varphi_T$, then $R_1ww$. By formula $(T_1)$, node $w$ is $p$, and it has at least one $p$-successor $w_1$ and at least one $\neg p$-successor $w_2$ via relation $R_1$. Besides, $(T_2)$ states that, each such $p$-successor $w_1$ of $w$ also can reach some $p$-node $w_3$ and $\neg p$-node $w_4$ by $R_1$. Finally, from $(T_3)$ we know that $w$ can only reach one $\neg p$-point by $R_1$ and that $w_1$ does not have $\neg p$-successors via $R_1$ any longer after cutting $\lr{w,w_2}$. So, $\lr{w,w_2}$ is identical with $\lr{w_1,w_4}$, which is followed by $R_1ww$ directly. 

Besides, formula $\varphi_T$ is indeed satisfiable with respect to $\mathfrak{M}^\bullet$. Consider the model depicted in Figure \ref{figure:reflexive}. It is not hard to see that $\varphi_T$ is true at $w$. Hence $\mathcal{L}_{\blacklozenge\lr{-}_1}$ lacks the tree model property.
\qed
\end{proof}

\begin{figure}
\centering
\begin{tikzpicture}
\node(a)[circle,draw,inner sep=0pt,minimum size=5mm] [label=below:$p$] at (0,0) {$w$};
\node(b)[circle,draw,inner sep=0pt,minimum size=5mm] at (2,0){$w_1$};
\draw[->](a) to  node[above] {$1$} (b);
\draw[->](a) to [in=160, out=200,looseness=8] node[left] {$1$}   (a);
\end{tikzpicture}
\caption{A model of $\varphi_T$.}
\label{figure:reflexive}
\end{figure}
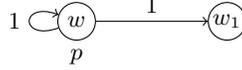

As observed, many instances of validities in our logic are not straightforward, and SLL has some distinguishing features. The results that we have so far are not sufficient enough to understand SLL. In the sections to come we will make a deeper investigation into our logic. 

\section{Expressive Power of SLL}\label{sec:expressivepower}

In this section, we study the expressivity of SLL. First, we will show that SLL is still a fragment of first-order logic even though it looks complicated. After this, a suitable notion of bisimulation for SLL is introduced. Finally, we provide a characterization theorem for the logic.

\subsection{First-Order Translation}
\label{subsec:firstorder}

Given the complicated semantics, is SLL still a fragment of FOL? In this section we will provide a positive answer to this question. To do so, we will describe a translation from SLL to FOL. However, compared with that for standard modal logic \cite{modallogic}, we now need some new devices. 

Let $\mathcal{L}_1$ be the first-order language consisting of countable unary predicates $P_{i\in N}$, two binary relations $R_{i\in\{1,2\}}$, and equality $\equiv$. Take any finite, non-empty sequence $E$ of variables. Let $y$ and $y'$ be two fresh variables not appearing in $E$. When there exists $\lr{x,x'}\in Set(E)$ with $x\equiv y$ and $x'\equiv y'$, we define $E|_{\lr{y,y'}}:=E|_{\lr{x,x'}}$. Now let us define the first-order translation.

\begin{definition}[First-Order Translation]\label{def-translation} Let $E=\lr{x_0,x_1,...,x_n}$ be a finite sequence of variables (non-empty), and $E^-=\{\lr{v_0,v'_0},...,\lr{v_i,v'_i}\}$ and $E^+=\{\lr{u_0,u'_0},...,\lr{u_j,u'_j}\}$ two finite sets of links. The translation $\mathcal{T}(\varphi,E,E^+,E^-)$ from $\mathcal{L}$-formulas $\varphi$ to first-order formulas is defined recursively as follows:
\begin{small}
\begin{align*}
\mathcal{T}(p,E,E^+,E^-)=&Pe(E)\\
\mathcal{T}(\neg\varphi,E,E^+,E^-)=&\neg\mathcal{T}(\varphi,E,E^+,E^-)\\
\mathcal{T}(\varphi\land\psi,E,E^+,E^-)=&\mathcal{T}(\varphi,E,E^+,E^-)\land \mathcal{T}(\psi,E,E^+,E^-)\\
\mathcal{T}(\blacklozenge\varphi,E,E^+,E^-)=&\exists y((\bigvee\limits_{\lr{x,x'}\in E^+}(e(E)\equiv x\land y\equiv x') \lor (R_1e(E)y\land\\
&\neg\bigvee\limits_{\lr{v,v'}\in E^-}(e(E)\equiv v\land y\equiv v')))\land\mathcal{T}(\varphi,E;y,E^+,E^-))\\
\mathcal{T}(\lr{-}_1\varphi,E,E^+,E^-)=&\exists y\exists y'(\bigvee\limits_{\lr{x,x'}\in Set(E)\setminus(E^-\cup E^+)}(y\equiv x\land y'\equiv x')\land\\
&R_1yy'\land\neg R_2yy'\land\mathcal{T}(\varphi,E|_{\lr{y,y'}},E^+,E^-\cup\{\lr{y,y'}\}))\\
\mathcal{T}(\lr{-}_2\varphi,E,E^+,E^-)=&\exists y\exists y'( R_1yy'\land\neg\bigvee\limits_{\lr{x,x'}\in Set(E)\cup E^-\cup E^+}(y\equiv x\land y'\equiv x')\land\\
&\neg R_2yy'\land\mathcal{T}(\varphi,E,E^+,E^-\cup\{\lr{y,y'}\}))\\
\mathcal{T}(\lr{+}\varphi,E,E^+,E^-)=&\exists y\exists y'(\neg\bigvee\limits_{\lr{x,x'}\in E^-\cup E^+}(y\equiv x\land y'\equiv x')\land\neg R_1yy'\land R_2yy'\land\\
&\mathcal{T}(\varphi,E,E^+\cup\{\lr{y,y'}\},E^-))
\end{align*}
\end{small}
\end{definition}

From the perspective of SLG, the sequence $E$ denotes Learner's process, and sets $E^+$ and $E^-$ represent the links that have already been added and deleted respectively. In any translation $\tau(\varphi,E,E^+,E^-)$, each of $E^+$ and $E^-$ may be extended. For any their extensions $E^+\cup X$ and $E^-\cup Y$, we have $X\cap Y=\emptyset$. Intuitively, this fact is in line with our semantics: for any $\lr{\lr{W,R_1,R_2,V},S}$, we always have $Set(S)\subseteq R_1$ and $(R_1\setminus R_2)\cap(R_2\setminus R_1)=\emptyset$, therefore links deleted are different from those added. Another point worth mentioning is that, unlike the case of standard modal logic, generally the translation does not yield a first-order formula with only one free variable. However, it does so when we set $E$, $E^+$ and $E^-$ to be a sequence consisting of a singleton, $\emptyset$ and $\emptyset$ respectively. By Definition \ref{def-translation}, we have the following result:

\begin{lemma}\label{lemma:translation}
Let $\mathcal{M}$ be a model and $\tau(\varphi,E,E^+,E^-)$ a translation s.t. $E^+\cap E^-=\emptyset$. Assume that $y$ and $y'$ are two fresh variables. For any assignment $\sigma$, we have $\mathcal{M}\ominus\lr{v,v'}\vDash\mathcal{T}(\varphi,E,E^+,E^-)[\sigma]$ iff $\mathcal{M}\vDash\mathcal{T}(\varphi,E,E^+,E^-\cup\{\lr{y,y'}\})[\sigma_{y^{(\prime)}:=v^{(\prime)}}]$, for any $\lr{v,v'}\in R_1\setminus R_2$; and $\mathcal{M}\oplus\lr{v,v'}\vDash\mathcal{T}(\varphi,E,E^+,E^-)[\sigma]$ iff $\mathcal{M}\vDash\mathcal{T}(\varphi,E,E^+\cup\{\lr{y,y'}\},E^-)[\sigma_{y^{(\prime)}:=v^{(\prime)}}]$, for any $\lr{v,v'}\in R_2\setminus R_1$.
\end{lemma}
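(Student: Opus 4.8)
The plan is to prove both biconditionals by a single induction on the structure of $\varphi$, treating the two claims in parallel since the inductive hypothesis in each case needs to be applied both to the $\ominus$-version and to the $\oplus$-version of the statement simultaneously. I would first unwind the definitions: the point of the lemma is that the translation $\mathcal{T}$ simulates the operations $\mathcal{M}\ominus\lr{v,v'}$ and $\mathcal{M}\oplus\lr{v,v'}$ \emph{syntactically}, by recording the deleted pair in $E^-$ and the added pair in $E^+$ rather than altering the model. So the whole content of the lemma is that everywhere $\mathcal{T}$ inspects $R_1$ or $R_2$, it also consults $E^-$ and $E^+$ to correct for the pending modification, and that this bookkeeping is faithful. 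I would fix the fresh variables $y,y'$ once at the outset and write $\sigma' := \sigma_{y^{(\prime)}:=v^{(\prime)}}$ for brevity, noting that since $y,y'$ are fresh they do not occur in $E$, $E^+$, or $E^-$, so $\sigma'$ agrees with $\sigma$ on all variables relevant to those.

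The Boolean base and inductive cases ($p$, $\neg$, $\land$) are immediate: the translation clause for $p$ is $Pe(E)$, which does not mention $R_1,R_2,E^+,E^-$ at all, and $\neg,\land$ commute with both the modification and the IH. The interesting cases are the four modal ones, and here the key observation to record is a uniform one: in each clause, $\mathcal{T}(\bigcirc\varphi,E,E^+,E^-)$ is an existential quantification over one or two fresh variables of a conjunction of (i) \emph{guard} formulas built only from $\equiv$, $R_1$, $R_2$ applied to those fresh variables and the variables occurring in $E,E^+,E^-$, and (ii) a recursive call $\mathcal{T}(\varphi,E'',(E^+)'',(E^-)'')$ with possibly updated arguments. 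For the guards, I would check directly that satisfaction of the guard in $\mathcal{M}\ominus\lr{v,v'}$ under $\sigma$ is equivalent to satisfaction of the corresponding guard-with-$\{\lr{y,y'}\}$-added-to-$E^-$ in $\mathcal{M}$ under $\sigma'$ — this is just the observation that $R_1^{\mathcal{M}\ominus\lr{v,v'}}ab$ holds iff $R_1^{\mathcal{M}}ab$ holds and not ($a\equiv v \wedge b\equiv v'$), and the latter disjunct is exactly what gets added under $\sigma'$ since $\sigma'(y)=v,\sigma'(y')=v'$; dually for $\oplus$ and $E^+$, using that $R_1^{\mathcal{M}\oplus\lr{v,v'}}ab$ iff $R_1^{\mathcal{M}}ab$ or ($a\equiv v\wedge b\equiv v'$). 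For the recursive call, I would apply the IH — but with care about which of the two claims of the lemma is being invoked, and about the fact that the sequence/set arguments have changed. When $\bigcirc$ is $\blacklozenge$, $E$ becomes $E;y$, which is harmless since freshness is preserved and $E;y$ still does not contain another copy of $y,y'$ being introduced; when $\bigcirc$ is $\lr{-}_1$ or $\lr{-}_2$, the recursive call already has $\{\lr{y,y'}\}$ added to $E^-$, matching the statement, and one must observe that after updating $E$ to $E|_{\lr{y,y'}}$ the condition $E^+\cap E^- = \emptyset$ is preserved (this is where the remark before the lemma about $X\cap Y=\emptyset$ is used); similarly for $\lr{+}$ with $E^+$.

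I expect the main obstacle to be purely organizational rather than mathematical: keeping the induction hypothesis clean when the modal cases themselves introduce new pairs into $E^-$ or $E^+$. The subtlety is that the lemma as stated modifies $\mathcal{M}$ by a \emph{single} edge, whereas during the induction one wants to peel off one $\ominus$ or $\oplus$ at a time from a formula that may perform several such operations; one has to make sure that, e.g., when handling $\lr{-}_2\varphi$ inside a model that is already $\mathcal{M}\ominus\lr{v,v'}$, the fresh variables chosen for the inner $\lr{-}_2$ are distinct from $y,y'$ and that the two pending deletions are recorded by two distinct pairs in $E^-$, so that the disjunctions in the guards correctly enumerate all currently-deleted edges. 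A clean way to package this is to strengthen the statement slightly before inducting — allowing finitely many pending $\ominus$'s and $\oplus$'s at once, recorded as finite sets, with the disjointness invariant — so that the induction hypothesis is exactly of the same shape as the conclusion; the lemma as stated is then the one-edge instance. I would also note for completeness that the second variable $y'$ plays no role for $\blacklozenge$ (only $y$ is introduced there), so strictly the $\oplus$/$\ominus$ pair lands in $R_1$ or $R_2$ which are binary, and the bookkeeping is uniform. Once the strengthened statement is set up, each of the four modal cases is a short, mechanical unfolding, and the result follows.
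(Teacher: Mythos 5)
Your proposal is correct and follows essentially the same route as the paper's own proof: an induction on the structure of $\varphi$ in which the guard formulas of each modal clause absorb the single pending edge modification (via the $\lr{y,y'}$ disjunct under $\sigma_{y^{(\prime)}:=v^{(\prime)}}$) and the inductive hypothesis is applied to the recursive call with the augmented $E^+$/$E^-$, the paper writing out only the $\ominus$ half and declaring the $\oplus$ half symmetric. The strengthening you propose at the end is not actually needed: the lemma already quantifies over arbitrary finite $E^+$ and $E^-$ with $E^+\cap E^-=\emptyset$, so the one-edge statement is closed under the induction exactly as you carry it out.
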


\begin{proof}
The proofs for these two cases are similar. We only prove the first one. Assume that $\lr{v,v'}\in R_1\setminus R_2$. For brevity, define $R^-_1:=R_1\setminus\{\lr{v,v'}\}$, i.e., $R^-_1$ is the relation obtained by deleting the link $\lr{v,v'}$ from $R_1$ in $\mathcal{M}$.

(1). When $\varphi$ is $p\in \bf{P}$, we have the following equivalences:
\begin{center}
\begin{tabular}{ll}
&$\mathcal{M}\ominus\lr{v,v'}\vDash\mathcal{T}(\varphi,E,E^+,E^-)[\sigma]$ \\
$\Leftrightarrow$ \quad & $\mathcal{M}\ominus\lr{v,v'}\vDash Pe(E)[\sigma]$\\
$\Leftrightarrow$\quad & $\mathcal{M}\vDash Pe(E)[\sigma]$\\
$\Leftrightarrow$\quad & $\mathcal{M}\vDash\mathcal{T}(\varphi,E,E^+,E^-\cup\{\lr{y,y'}\})[\sigma_{y^{(\prime)}:=v^{(\prime)}}]$
\end{tabular}
\end{center}
The first equivalence holds by Definition \ref{def-translation} directly. The second one follows from the definition of $\mathcal{M}\ominus\lr{v,v'}$. The last one follows by Definition \ref{def-translation}.

(2). The proofs for the Boolean connectives are straightforward.

(3). $\varphi$ is $\blacklozenge\psi$. By $\lr{v,v'}\in R_1\setminus R_2$, the definitions of $\mathcal{M}\ominus\lr{v,v'}$ and the standard translation, it holds that:
\begin{center}
\begin{tabular}{ll}
&$\mathcal{M}\ominus\lr{v,v'}\vDash\mathcal{T}(\varphi,E,E^+,E^-)[\sigma]$ \\
$\Leftrightarrow$ \quad & $\mathcal{M}\ominus\lr{v,v'}\vDash\exists u(((\neg\bigvee\limits_{\lr{x,x'}\in E^-}(e(E)\equiv x\land u\equiv x')\land  R^-_1e(E)u)\lor$\\
&\qquad\qquad\qquad $\bigvee\limits_{\lr{z,z'}\in E^+}(e(E)\equiv z\land u\equiv z'))\land\mathcal{T}(\psi,E;u,E^+,E^-))[\sigma]$\\
$\Leftrightarrow$ \quad & $\mathcal{M}\vDash\exists u(( (\neg\bigvee\limits_{\lr{x,x'}\in E^-\cup\{y,y'\}}(e(E)\equiv x\land u\equiv x')\land  R_1e(E)u)\lor$\\
&\qquad\; $\bigvee\limits_{\lr{z,z'}\in E^+}(e(E)\equiv z\land u\equiv z'))\land$\\
&\qquad\;$\mathcal{T}(\psi,E;u,E^+,E^-\cup\{\lr{y,y'}\}))[\sigma_{y^{(\prime)}:=v^{(\prime)}}]$\\
$\Leftrightarrow$ \quad &$\mathcal{M}\vDash\mathcal{T}(\varphi,E,E^+,E^-\cup\{\lr{y,y'}\})[\sigma_{y^{(\prime)}:=v^{(\prime)}}]$
\end{tabular}    
\end{center}

(4). When $\varphi$ is $\lr{-}_1\psi$, we have that:
\begin{center}
\begin{tabular}{ll}
&$\mathcal{M}\ominus\lr{v,v'}\vDash\mathcal{T}(\varphi,E,E^+,E^-)[\sigma]$ \\
$\Leftrightarrow$ \quad & $\mathcal{M}\ominus\lr{v,v'}\vDash \exists u\exists u'(\bigvee\limits_{\lr{z,z'}\in Set(E)\setminus(E^+\cup E^-)}(u\equiv z\land u'\equiv z')\land R^-_1uu'\land $\\
&\qquad\qquad\qquad $\neg R_2uu'\land \mathcal{T}(\psi,E|_{\lr{u,u'}},E^+,E^-\cup\{\lr{u,u'}\}))[\sigma]$\\
$\Leftrightarrow$ \quad & $\mathcal{M}\vDash\exists u\exists u'(\bigvee\limits_{\lr{z,z'}\in Set(E)\setminus(E^+\cup (E^-\cup\{\lr{y,y'}\}))}(u\equiv z\land u'\equiv z')\land R_1uu'\land$\\
&\qquad\;$\neg R_2uu'\land\mathcal{T}(\psi,E|_{\lr{u,u'}},E^+,E^-\cup\{\lr{u,u'},\lr{y,y'}\}))[\sigma_{y^{(\prime)}:=v^{(\prime)}}]$\\
$\Leftrightarrow$ \quad &$\mathcal{M}\vDash\mathcal{T}(\varphi,E,E^+,E^-\cup\{\lr{y,y'}\})[\sigma_{y^{(\prime)}:=v^{(\prime)}}]$
\end{tabular}    
\end{center}

(5). $\varphi$ is $\lr{-}_2\psi$. The following equivalences hold:
\begin{center}
\begin{tabular}{ll}
&$\mathcal{M}\ominus\lr{v,v'}\vDash\mathcal{T}(\varphi,E,E^+,E^-)[\sigma]$ \\
$\Leftrightarrow$ &$\mathcal{M}\ominus\lr{v,v'}\vDash\exists u\exists u'( \neg\bigvee\limits_{\lr{z,z'}\in Set(E)\cup E^-\cup E^+}(u\equiv z\land u'\equiv z')\land$\\ &\qquad\qquad\qquad$\neg R_2uu'\land R^-_1uu'\land\mathcal{T}(\psi,E,E^+,E^-\cup\{\lr{u,u'}\}))[\sigma]$\\
$\Leftrightarrow$ &$\mathcal{M}\vDash\exists u\exists u'(\neg\bigvee\limits_{\lr{z,z'}\in Set(E)\cup E^+\cup (E^-\cup \{\lr{y,y'}\})}(u\equiv z\land u'\equiv z')\land$\\
&\qquad\;$ \neg R_2uu'\land R_1uu'\land\mathcal{T}(\psi,E,E^+,E^-\cup\{\lr{u,u'},\lr{y,y'}\}))[\sigma_{y^{(\prime)}:=v^{(\prime)}}]$\\
$\Leftrightarrow$ & $\mathcal{M}\vDash\mathcal{T}(\varphi,E,E^+,E^-\cup\{\lr{y,y'}\})[\sigma_{y^{(\prime)}:=v^{(\prime)}}]$
\end{tabular}    
\end{center}

(6). When $\varphi$ is $\lr{+}\psi$, it holds that:
\begin{center}
\begin{tabular}{ll}
&$\mathcal{M}\ominus\lr{v,v'}\vDash\mathcal{T}(\varphi,E,E^+,E^-)[\sigma]$ \\
$\Leftrightarrow$ & $\mathcal{M}\ominus\lr{v,v'}\vDash\exists u\exists u'(\neg\bigvee\limits_{\lr{z,z'}\in E^-\cup E^+}(u\equiv z\land u'\equiv z')\land\neg R^-_1uu'\land $\\
&\qquad\qquad\qquad $R_2uu'\land\mathcal{T}(\psi,E,E^+\cup\{\lr{u,u'}\},E^-))[\sigma]$\\
$\Leftrightarrow$ & $\mathcal{M}\vDash\exists u\exists u'(\neg\bigvee\limits_{\lr{z,z'}\in (E^-\cup\{y,y'\})\cup E^+}(u\equiv z\land u'\equiv z')\land\neg R_1uu'\land $\\
&\qquad\; $R_2uu'\land\mathcal{T}(\psi,E,E^+\cup\{\lr{u,u'}\},E^-\cup\{\lr{y,y'}\}))[\sigma_{y^{(\prime)}:=v^{(\prime)}}]$\\
$\Leftrightarrow$ & $\mathcal{M}\vDash\mathcal{T}(\varphi,E,E^+,E^-\cup\{\lr{y,y'}\})[\sigma_{y^{(\prime)}:=v^{(\prime)}}]$
\end{tabular}    
\end{center}
The proof is completed.\qed
\end{proof}

With Lemma \ref{lemma:translation}, we now can show the correctness of the translation:  

\begin{theorem}\label{theorem-correctnessoftranslation}
Let $\lr{\mathcal{M},S}$ be a pointed model and $E$ an $R_1$-sequence of variables with the same size as $S$. For any $\varphi\in\mathcal{L}$,
$\mathcal{M},S\vDash\varphi\;\;{\textit{iff}}\;\;\mathcal{M}\vDash\mathcal{T}\varphi, E, \emptyset,\emptyset)[E:=S].$
\end{theorem}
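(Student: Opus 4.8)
The plan is to prove the theorem by induction on the structure of $\varphi$, strengthening the statement so that the induction goes through for arbitrary $E^+$ and $E^-$ rather than just $\emptyset, \emptyset$. Concretely, I would prove the following generalized claim: for every $\varphi\in\mathcal{L}$, every model $\mathcal{M}=\lr{W,R_1,R_2,V}$, every sequence of variables $E$ with a matching $R_1$-sequence $S$ (under the current relation obtained from $\mathcal{M}$ by the additions $E^+$ and deletions $E^-$), and every finite sets of links $E^+, E^-$ with $E^+\cap E^- =\emptyset$, we have $\mathcal{M}',S\vDash\varphi$ iff $\mathcal{M}\vDash\mathcal{T}(\varphi,E,E^+,E^-)[E:=S]$, where $\mathcal{M}'$ is the model $\lr{W,(R_1\cup E^+)\setminus E^-,R_2,V}$. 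The desired theorem is then the special case $E^+=E^-=\emptyset$. Intuitively, $E^+$ and $E^-$ are bookkeeping devices recording the edges Teacher has added/removed so far; the translation keeps these as syntactic side conditions instead of actually modifying the relation symbol $R_1$, so the generalized invariant is exactly what lets the modal operators for $\blacklozenge, \lr{-}_1, \lr{-}_2, \lr{+}$ be unwound one layer at a time.

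For the base case and the Boolean cases the argument is immediate from the definition of $\mathcal{T}$ and the semantics. For the modal cases, the key observation is that the disjunctions/conjunctions over $E^+$ and $E^-$ appearing in Definition~\ref{def-translation} exactly mirror the set-theoretic operations in the semantics of Definition~\ref{def-semantics}: the clause $\bigvee_{\lr{x,x'}\in E^+}(e(E)\equiv x\land y\equiv x')\lor(R_1 e(E)y\land\neg\bigvee_{\lr{v,v'}\in E^-}(\ldots))$ in $\mathcal{T}(\blacklozenge\varphi,\ldots)$ says precisely that $y$ is an $R_1$-successor of $e(E)$ in the \emph{modified} model $\mathcal{M}'$. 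So after instantiating the existential witness, the inner translation is applied with $E$ extended (for $\blacklozenge$) or $E^\pm$ extended (for the Teacher operators) and the same modified relation, and the inductive hypothesis closes the case --- but here one must invoke Lemma~\ref{lemma:translation} to match the two ways of describing ``delete $\lr{v,v'}$ from $R_1$'': namely, moving from $\mathcal{M}\ominus\lr{v,v'}$ with the side-set $E^-$ to $\mathcal{M}$ with the side-set $E^-\cup\{\lr{y,y'}\}$ (and symmetrically for $\lr{+}$ via $\oplus$). For $\lr{-}_1$ one also needs to check that $E|_{\lr{y,y'}}$, defined in the excerpt via matching on $Set(E)$, tracks $S|_{\lr{v,v'}}$ correctly under the assignment $E:=S$; this is the one spot where the bookkeeping on sequences (as opposed to relations) has to be verified, and it follows from the definition of $E|_{\lr{y,y'}}$ together with the fact that the witnessed link lies in $Set(E)\setminus(E^+\cup E^-)$, hence is a genuine edge of the current path.

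The main obstacle I anticipate is not any single case but rather getting the generalized invariant exactly right and keeping the three-way interaction between (i) the actual relation $R_1$ of $\mathcal{M}$, (ii) the ``virtual'' relation $(R_1\cup E^+)\setminus E^-$ that the side conditions encode, and (iii) the fresh-variable substitutions $[\sigma_{y^{(\prime)}:=v^{(\prime)}}]$ all consistent across nested modalities. In particular, for $\lr{-}_1$ and $\lr{-}_2$ the translation requires the deleted link to satisfy $R_1 yy'\land\neg R_2 yy'$ while \emph{also} not lying in $E^-\cup E^+$ (and, for $\lr{-}_2$, not in $Set(E)$ either); one must check these conjuncts are equivalent, under the invariant, to the semantic side conditions $\lr{v,v'}\in Set(S)\setminus R_2$ and $\lr{v,v'}\in(R_1^{\mathcal{M}'}\setminus R_2)\setminus Set(S)$ respectively --- using $E^+\cap E^-=\emptyset$ and $Set(S)\subseteq R_1^{\mathcal{M}'}$ to rule out the degenerate overlaps. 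Lemma~\ref{lemma:translation} does most of the heavy lifting for the ``peeling'' step, so once the invariant is stated correctly the remaining work is a careful but routine unfolding; I would present the $\blacklozenge$ and $\lr{-}_1$ cases in full and remark that $\lr{-}_2$ and $\lr{+}$ are analogous.
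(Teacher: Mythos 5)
Your proposal is correct and its substance matches the paper's proof; the only real difference is in how the induction is packaged. The paper keeps the main induction at $E^+=E^-=\emptyset$ throughout: in each modal case it applies the inductive hypothesis to the \emph{modified} model (e.g.\ $\mathcal{M}\ominus\lr{v,v'}$ with empty side-sets) and then uses Lemma~\ref{lemma:translation} to trade that for the \emph{original} model with the witnessed link pushed into $E^-$ (or $E^+$), which is exactly the shape of $\mathcal{T}$. You instead propose a single strengthened induction over arbitrary $E^+,E^-$, relating $\mathcal{T}(\varphi,E,E^+,E^-)$ in $\mathcal{M}$ to truth in the virtual model $\lr{W,(R_1\cup E^+)\setminus E^-,R_2,V}$; the theorem is then the instance with empty side-sets. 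Both routes work, and your generalized invariant is essentially Lemma~\ref{lemma:translation} and the theorem rolled into one statement --- which means your additional appeal to Lemma~\ref{lemma:translation} inside the inductive step is redundant: once the strengthened hypothesis is in place it already absorbs the ``peeling'' that the lemma performs. One bookkeeping point you should make explicit if you write this out: the invariant must also record that, under the current assignment, every link in $E^-$ lies in $R_1\setminus R_2$ and every link in $E^+$ lies in $R_2\setminus R_1$ (the paper states these as hypotheses of Lemma~\ref{lemma:translation}); without them the equivalence of the guards in the $\lr{-}_2$ and $\lr{+}$ clauses with the semantic side conditions on the virtual model does not quite close. These conditions are preserved by the translation's own conjuncts, so this is a matter of stating the invariant fully rather than a gap in the argument. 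Your identification of the $E|_{\lr{y,y'}}$ versus $S|_{\lr{v,v'}}$ correspondence as the one sequence-level check needed for $\lr{-}_1$ is exactly right and is handled the same way in the paper.
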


\begin{proof}
The proof is by induction on the structure of $\varphi$. 

(1). Formula $\varphi$ is a propositional atom $p\in\bf{P}$. By the semantics of SLL, $\mathcal{M},S\vDash\varphi$ if and only if $e(S)\in V(p)$. On the other hand, by Definition \ref{def-translation}, $\mathcal{T}(\varphi,E,\emptyset,\emptyset)$ is $Pe(E)$. So we have $\mathcal{M},S\vDash \varphi$ iff $\mathcal{M}\vDash\mathcal{T}(\varphi,E,\emptyset,\emptyset)[E:=S]$.

(2). The cases for Boolean connectives $\neg$ and $\land$ are trivial.

(3). When $\varphi$ is $\blacklozenge\psi$, we have the following equivalences:

\begin{center}
\begin{tabular}{lll}
&$\mathcal{M},S\vDash\varphi$ \\ 
$\Leftrightarrow$ \quad& $\exists v\in R_1(S)$ s.t. $\mathcal{M},S;v\vDash\psi$\\
$\Leftrightarrow$ & $\exists v\in R_1(S)$ s.t. $\mathcal{M}\vDash \mathcal{T}(\psi,E;y,\emptyset,\emptyset)[E:=S,y:=v]$\\
$\Leftrightarrow$ & $\mathcal{M}\vDash\exists y(R_1e(E)y\land\mathcal{T}(\varphi,E;y,\emptyset,\emptyset))[E:=S]$\\
$\Leftrightarrow$ & $\mathcal{M}\vDash \mathcal{T}(\varphi,E,\emptyset,\emptyset)[E:=S]$
\end{tabular}
\end{center}

(4). When $\varphi$ is $\lr{-}_1\psi$, it holds that:

\begin{center}
\begin{tabular}{lll}
&$\mathcal{M},S\vDash\varphi$\\
$\Leftrightarrow$\quad & $\exists\lr{v,v'}\in (Set(S)\setminus R_2)$ s.t. $\mathcal{M}\ominus\lr{v,v'},S|_{\lr{v,v'}}\vDash\psi$\\
$\Leftrightarrow$& $\exists\lr{v,v'}\in (Set(S)\setminus R_2)$ s.t.\\
&$\mathcal{M}\ominus\lr{v,v'}\vDash\mathcal{T}(\psi,E|_{\lr{y,y'}},\emptyset,\emptyset)[E:=S,y^{(\prime)}:=v^{(\prime)}]$\\
$\Leftrightarrow$& $\mathcal{M}\vDash\exists y\exists y'(\bigvee\limits_{\lr{v,v'}\in Set(E)}(y\equiv v\land y'\equiv v')\land R_1yy'\land\neg R_2yy'\land$\\
& \qquad \; $\mathcal{T}(\psi,E|_{\lr{y,y'}},\emptyset,\{\lr{y,y'}\}))[E:=S]$\\
$\Leftrightarrow$& $\mathcal{M}\vDash \mathcal{T}(\varphi,E,\emptyset,\emptyset)[E:=S]$
\end{tabular}
\end{center}

(5). If $\varphi$ is $\lr{-}_2\psi$, then we have: 

\begin{center}
\begin{tabular}{lll}
& $\mathcal{M},S\vDash\varphi$\\
$\Leftrightarrow$\quad & $\exists\lr{v,v'}\in (R_1\setminus R_2)\setminus Set(S)$ s.t. $\mathcal{M}\ominus\lr{v,v'},S\vDash\psi$\\
$\Leftrightarrow$& $\exists\lr{v,v'}\in (R_1\setminus R_2)\setminus Set(S)$ s.t. $\mathcal{M}\ominus\lr{v,v'}\vDash\mathcal{T}(\psi,E,\emptyset,\emptyset)[E:=S]$\\
$\Leftrightarrow$& $\mathcal{M}\vDash\exists y\exists y'(\neg\bigvee\limits_{\lr{v,v'}\in Set(E)}(y\equiv v\land y'\equiv v')\land R_1yy'\land\neg R_2yy'\land $\\
&\qquad\;$\mathcal{T}(\psi,E,\emptyset,\{\lr{y,y'}\}))[E:=S]$\\
$\Leftrightarrow$& $\mathcal{M}\vDash \mathcal{T}(\varphi,E,\emptyset,\emptyset)[E:=S]$
\end{tabular}   
\end{center}

(6). If $\varphi$ is $\lr{+}\psi$, then we have:

\begin{center}
\begin{tabular}{lll}
&$\mathcal{M},S\vDash\varphi$\\
$\Leftrightarrow$\quad & $\exists\lr{v,v'}\in R_2\setminus R_1$ s.t. $\mathcal{M}\oplus\lr{v,v'},S\vDash\psi$\\
$\Leftrightarrow$& $\exists\lr{v,v'}\in R_2\setminus R_1$ s.t.$\mathcal{M}\oplus\lr{v,v'}\vDash\mathcal{T}(\psi,E,\emptyset,\emptyset)[E:=S]$\\
$\Leftrightarrow$ & $\mathcal{M}\vDash\exists y\exists y'(R_2yy'\land\neg R_1yy'\land\mathcal{T}(\psi,E,\{\lr{y,y'}\},\emptyset))[E:=S]$\\
$\Leftrightarrow$&$\mathcal{M}\vDash \mathcal{T}(\varphi,E,\emptyset,\emptyset)[E:=S]$
\end{tabular}
\end{center}

This completes the proof.
\qed
\end{proof}

Note that the translation in Theorem \ref{theorem-correctnessoftranslation} has an extra requirement on the sequence $E$, i.e., $Set(E)\subseteq R_1$. Intuitively, this restriction corresponds to the definition of pointed models. For each $\lr{\mathcal{M},w}\in\mathfrak{M}^\bullet$, any extension $E'$ of $w$ fulfils the requirement naturally by Definition \ref{def-translation}.

\subsection{Bisimulation and Characterization for SLL}
\label{subsec:bisim}
The notion of bisimulation serves as a useful tool for establishing the expressive power of modal logics. However, it is not hard to see that SLL is not closed under the standard bisimulation \cite{modallogic}. In this section we introduce a novel notion of `learning bisimulation (l-bisimulation)' tailored to our logic, which leads to a characterization theorem for SLL as a fragment of first-order logic.

\begin{definition}[l-Bisimulation]\label{def-bisimulation} For any two models $\mathcal{M}=\lr{W,R_1,R_2,V}$ and $\mathcal{M}'=\lr{W',R'_1,R'_2,V'}$, a non-empty relation $Z_l \subseteq\bf{U}^*(\lr{\mathcal{M},S})\times\bf{U}^*(\lr{\mathcal{M}',S'})$ is an l-bisimulation between the two pointed models $\lr{\mathcal{M},S}$ and $\lr{\mathcal{M}',S'}$ (notation: $\lr{\mathcal{M},S}Z_l\lr{\mathcal{M}',S'}$) if the following conditions are satisfied:
\begin{enumerate}[align=left,itemindent=-1em]
\item[{\rm{\textbf{Atom:}}}] $\mathcal{M},S\vDash p$ iff $\mathcal{M}',S'\vDash p$, for each $p\in\textbf{\rm{\textbf{P}}}$.
\item[{\rm{\textbf{Zig$_{\blacklozenge}$:}}}] If there exists $v\in W_1$ s.t. $R_1wv$, then there exists $v'\in W_1$ s.t. $R_1'w'v'$ and $\lr{\mathcal{M},S;v}Z_l\lr{\mathcal{M}',S';v'}$.
\item[{\rm{\textbf{Zig$_{\lr{-}_1}$:}}}] If there is $\lr{u,v}\in Set(S)\setminus R_2$, then there is $\lr{u',v'}\in Set(S')\setminus R'_2$ with $\lr{\mathcal{M}\ominus\lr{u,v},S|_{\lr{u,v}}}Z_l\lr{\mathcal{M}'\ominus\lr{u',v'},S'|_{\lr{u',v'}}}$.
\item[{\rm{\textbf{Zig$_{\lr{-}_2}$:}}}] If there exists $\lr{u,v}\in (R_1\setminus R_2)\setminus Set(S)$, then there exists $\lr{u',v'}\in (R'_1\setminus R'_2)\setminus Set(S')$ with $\lr{\mathcal{M}\ominus\lr{u,v},S}Z_l\lr{\mathcal{M}'\ominus\lr{u',v'},S'}$.
\item[{\rm{\textbf{Zig$_{\lr{+}}$:}}}] If there exists $\lr{u,v}\in R_2\setminus R_1$, then there exists $\lr{u',v'}\in R'_2\setminus R'_1$ with $\lr{\mathcal{M}\oplus\lr{u,v},S}Z_l\lr{\mathcal{M}'\oplus\lr{u',v'},S'}$.
\item[{\rm{\textbf{Zag$_{\blacklozenge}$}}}, {\rm{\textbf{Zag$_{\lr{-}_1}$}}}, {\rm{\textbf{Zag$_{\lr{-}_2}$}}} and {\rm{\textbf{Zag$_{\lr{+}}$}}}:] the analogous clauses in the converse direction of {\rm{\textbf{Zig$_{\blacklozenge}$}}}, {\rm{\textbf{Zig$_{\lr{-}_1}$}}}, {\rm{\textbf{Zig$_{\lr{-}_2}$}}} and {\rm{\textbf{Zig$_{\lr{+}}$}}} respectively.
\end{enumerate}
For brevity, we write $\lr{\mathcal{M}_1,w}\underline{\leftrightarrow}_l\lr{\mathcal{M}_2,v}$ if there is an l-bisimulation $Z_l$ with $\lr{\mathcal{M}_1,w}Z_l\lr{\mathcal{M}_2,v}$.
\end{definition}

The clauses for $\blacklozenge$ is similar to those for the basic modality in the standard bisimulation: they keep the model fixed and extend the evaluation sequence with some of its $R_1$-successors. In contrast, all of the conditions for $\lr{-}_1$, $\lr{-}_2$ and $\lr{+}$ change the model. In particular, clauses for $\lr{-}_2$ and $\lr{+}$ do not modify the evaluation sequence, while those for $\lr{-}_1$ change both the model and the current sequence. Now we can show the following result:

\begin{theorem}[$\underline{\leftrightarrow}_l\subseteq\leftrightsquigarrow_l$]\label{theorem-bisimtoequiv}
For any pointed models $\lr{\mathcal{M},S}$ and $\lr{\mathcal{M}',S'}$, it holds that: $\lr{\mathcal{M},S}\underline{\leftrightarrow}_l\lr{\mathcal{M}',S'}\Rightarrow\lr{\mathcal{M},S}\leftrightsquigarrow_l\lr{\mathcal{M}',S'}$. 
\end{theorem}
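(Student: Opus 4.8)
The plan is to run the familiar ``bisimulation implies modal equivalence'' induction, now carrying one back-and-forth clause per modality of $\mathcal{L}$. Fix an l-bisimulation $Z_l$ with $\lr{\mathcal{M},S}Z_l\lr{\mathcal{M}',S'}$. I would prove, by induction on the structure of $\varphi\in\mathcal{L}$, the strengthened statement: for \emph{every} pair $\lr{\mathcal{N},T}Z_l\lr{\mathcal{N}',T'}$ we have $\mathcal{N},T\vDash\varphi$ iff $\mathcal{N}',T'\vDash\varphi$. Strengthening from the designated pair to all of $Z_l$ is essential, since each modal case pushes the point of evaluation to a \emph{new} $Z_l$-related pair of pointed models; the theorem then follows by instantiating the claim at $\lr{\mathcal{M},S}Z_l\lr{\mathcal{M}',S'}$.

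The base case (atoms) is exactly the \textbf{Atom} clause, and the Boolean cases $\neg,\land$ follow directly from the induction hypothesis, since Definition \ref{def-semantics} evaluates them pointwise at the current sequence. Each modal case splits into the direction $(\Rightarrow)$ handled by the relevant \textbf{Zig} clause and a symmetric direction $(\Leftarrow)$ handled by the matching \textbf{Zag} clause; I describe $(\Rightarrow)$. If $\mathcal{N},T\vDash\blacklozenge\psi$, pick a witness $v$ with $R_1e(T)v$ in $\mathcal{N}$ and $\mathcal{N},T;v\vDash\psi$; \textbf{Zig}$_{\blacklozenge}$ yields $v'$ with $R_1'e(T')v'$ in $\mathcal{N}'$ and $\lr{\mathcal{N},T;v}Z_l\lr{\mathcal{N}',T';v'}$, so the induction hypothesis gives $\mathcal{N}',T';v'\vDash\psi$, i.e.\ $\mathcal{N}',T'\vDash\blacklozenge\psi$. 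If $\mathcal{N},T\vDash\lr{-}_1\psi$, take $\lr{u,v}\in Set(T)\setminus R_2$ with $\mathcal{N}\ominus\lr{u,v},T|_{\lr{u,v}}\vDash\psi$; \textbf{Zig}$_{\lr{-}_1}$ supplies $\lr{u',v'}\in Set(T')\setminus R_2'$ with $\lr{\mathcal{N}\ominus\lr{u,v},T|_{\lr{u,v}}}Z_l\lr{\mathcal{N}'\ominus\lr{u',v'},T'|_{\lr{u',v'}}}$, and the induction hypothesis closes the case. The cases for $\lr{-}_2\psi$ and $\lr{+}\psi$ have the same shape, using \textbf{Zig}$_{\lr{-}_2}$ and \textbf{Zig}$_{\lr{+}}$ respectively, except that the evaluation sequence is left untouched and only the relation $R_1$ changes.

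I do not expect a genuine obstacle: the content is bookkeeping across four modalities whose effects on $(\mathcal{M},S)$ differ only mildly (some alter the sequence, some alter $R_1$, one alters both). The single point that warrants care is type-correctness of the induction: every pointed model produced by a \textbf{Zig}/\textbf{Zag} step must still lie in the domain, resp.\ codomain, of $Z_l$ — i.e.\ remain $\bf{U}^*$-reachable from $\lr{\mathcal{M},S}$ resp.\ $\lr{\mathcal{M}',S'}$ — so that $Z_l$ can relate the resulting pair; this is precisely what the clauses of Definition \ref{def-bisimulation} are arranged to guarantee. A secondary remark is that the sequence operations $T\mapsto T;v$ and $T\mapsto T|_{\lr{u,v}}$ need not be carried out ``in the same way'' on the two sides: we never use such a correspondence, only that the two resulting pointed models are $Z_l$-related, which is exactly the content of the corresponding back-and-forth condition.
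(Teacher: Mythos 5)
Your proposal is correct and matches the paper's own proof: both run the standard induction on the structure of $\varphi$, dispatching each modal case with the corresponding \textbf{Zig}/\textbf{Zag} clause and the inductive hypothesis. Your explicit strengthening of the induction hypothesis to all $Z_l$-related pairs and the remark on $\bf{U}^*$-reachability are sensible clarifications of bookkeeping the paper leaves implicit, not a different argument.
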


\begin{proof}
The proof goes by induction on $\varphi$. Assume that $\lr{\mathcal{M},S}\underline{\leftrightarrow}_l\lr{\mathcal{M}',S'}$. The Boolean cases are straightforward.

(1). $\varphi$ is $\blacklozenge\psi$. If $\mathcal{M},S\vDash\varphi$, then there exists $v\in R_1(S)$ such that $\mathcal{M},S;v\vDash\psi$. By \textbf{Zig}$_{\blacklozenge}$, there exists $v'\in R'_1(S')$ such that $\lr{\mathcal{M},S;v}\underline{\leftrightarrow}_l\lr{\mathcal{M}',S';v'}$. By the inductive hypothesis, it holds that $\lr{\mathcal{M},S;v}\leftrightsquigarrow_l\lr{\mathcal{M}',S';v'}$, consequently, $\mathcal{M}',S';v'\vDash\psi$, which is followed by $\mathcal{M}',S'\vDash\varphi$ immediately. Similarly, we can obtain $\mathcal{M},S\vDash\varphi$ from $\mathcal{M}',S'\vDash\varphi$ by \textbf{Zag}$_{\blacklozenge}$.

(2). $\varphi$ is $\lr{-}_1\psi$. When $\mathcal{M},S\vDash\varphi$, there exists $\lr{u,v}\in Set(S)\setminus R_2$ such that $\mathcal{M}\ominus\lr{u,v},S|_{\lr{u,v}}\vDash\psi$. By {\rm{\textbf{Zig}$_{\lr{-}_1}$}}, there exists $\lr{u',v'}\in Set(S')\setminus R'_2$ with $\lr{\mathcal{M}\ominus\lr{u,v},S|_{\lr{u,v}}}\underline{\leftrightarrow}_l\lr{\mathcal{M}'\ominus\lr{u',v'},S'|_{\lr{u',v'}}}$. By the inductive hypothesis, $\lr{\mathcal{M}\ominus\lr{u,v},S|_{\lr{u,v}}}\leftrightsquigarrow_l\lr{\mathcal{M}'\ominus\lr{u',v'},S'|_{\lr{u',v'}}}$. So, $\mathcal{M}'\ominus\lr{u',v'},S'|_{\lr{u',v'}}\vDash\psi$, which is followed by $\mathcal{M}',S'\vDash\varphi$. In a similar way, when $\mathcal{M}',S'\vDash\varphi$, we can prove $\mathcal{M},S\vDash\varphi$ by {\rm{\textbf{Zag}$_{\lr{-}_1}$}}.

(3). $\varphi$ is $\lr{-}_2\psi$. If $\mathcal{M},S\vDash\varphi$, then there is $\lr{u,v}\in(R_1\setminus R_2)\setminus Set(S)$ with $\mathcal{M}\ominus\lr{u,v},S\vDash\psi$. By {\rm{\textbf{Zig}$_{\lr{-}_2}$}}, there exists $\lr{u',v'}\in (R'_1\setminus R'_2)\setminus Set(S')$ such that $\lr{\mathcal{M}\ominus\lr{u,v},S}\underline{\leftrightarrow}_l\lr{\mathcal{M}'\ominus\lr{u',v'},S'}$. By the inductive hypothesis, $\lr{\mathcal{M}\ominus\lr{u,v},S}\leftrightsquigarrow_l\lr{\mathcal{M}'\ominus\lr{u',v'},S'}$. Consequently, $\mathcal{M}'\ominus\lr{u',v'},S'\vDash\psi$. So we have $\mathcal{M}',S'\vDash\varphi$. Similarly, when $\mathcal{M}',S'\vDash\varphi$, we can prove $\mathcal{M},S\vDash\varphi$ by {\rm{\textbf{Zag}$_{\lr{-}_2}$}}.

(4). $\varphi$ is $\lr{+}\psi$. When $\mathcal{M},S\vDash\varphi$, there exists $\lr{u,v}\in R_2\setminus R_1$ such that $\mathcal{M}\oplus\lr{u,v},S\vDash\psi$. By \textbf{Zig}$_{\lr{+}}$, there exists $\lr{u',v'}\in R'_2\setminus R'_1$ with  $\lr{\mathcal{M}\oplus\lr{u,v},S}\underline{\leftrightarrow}_l\lr{\mathcal{M}'\oplus\lr{u',v'},S'}$. By IH, $\lr{\mathcal{M}\oplus\lr{u,v},S}\leftrightsquigarrow_l\lr{\mathcal{M}'\oplus\lr{u',v'},S'}$. Therefore we have $\mathcal{M}'\oplus\lr{u',v'},S'\vDash\psi$, consequently, $\mathcal{M}',S'\vDash\varphi$. Similarly, by \textbf{Zag}$_{\lr{+}}$, we know $\mathcal{M},S\vDash\varphi$ from $\mathcal{M}',S'\vDash\varphi$.
\qed
\end{proof}

Moreover, the converse direction of Theorem \ref{theorem-bisimtoequiv} holds for the models that are $\omega$-saturated. To introduce its definition, we need some auxiliary notations. For each finite set $Y$, we denote the expansion of $\mathcal{L}_{1}$ with a set $Y$ of constants with $\mathcal{L}_1^Y$, and denote the expansion of $\mathcal{M}$ to $\mathcal{L}_1^Y$ with $\mathcal{M}^Y$. Let $\mathbf{x}$ be a finite tuple of variables. 
A model $\mathcal{M}=\lr{W,R_1,R_2,V}$ is \textit{$\omega$-saturated} if, for every finite subset $Y$ of $W$, the expansion $\mathcal{M}^Y$ realizes every set $\Gamma(\mathbf{x})$ of $\mathcal{L}_1^Y$-formulas whose finite subsets $\Gamma'(\mathbf{x})$ are all realized in $\mathcal{M}^Y$.

\begin{theorem}[$\leftrightsquigarrow_l\subseteq\underline{\leftrightarrow}_l$]\label{theorem-omegaequivtobisim}
For any $\omega$-saturated $\lr{\mathcal{M},S}$ and $\lr{\mathcal{M}',S'}$, it holds that: $\lr{\mathcal{M},S}\leftrightsquigarrow_l\lr{\mathcal{M}',S'}\Rightarrow\lr{\mathcal{M},S}\underline{\leftrightarrow}_l\lr{\mathcal{M}',S'}$. 
\end{theorem}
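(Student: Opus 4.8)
The plan is to define an explicit relation $Z_l$ between $\omega$-saturated pointed models and verify it is an l-bisimulation. Concretely, I would set $\lr{\mathcal{N},T}\,Z_l\,\lr{\mathcal{N}',T'}$ whenever both pointed models are reachable from the original ones via matching $\bf{U}^*$-paths (so $\lr{\mathcal{N},T}\in\bf{U}^*(\lr{\mathcal{M},S})$ and $\lr{\mathcal{N}',T'}\in\bf{U}^*(\lr{\mathcal{M}',S'})$) and $\lr{\mathcal{N},T}\leftrightsquigarrow_l\lr{\mathcal{N}',T'}$. Since $\leftrightsquigarrow_l$ holds of $\lr{\mathcal{M},S}$ and $\lr{\mathcal{M}',S'}$ by hypothesis, $Z_l$ is non-empty; the real work is checking the eight back-and-forth clauses, and for this I need the $\omega$-saturation to transfer along $\bf{U}$-steps. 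So the first lemma I would isolate: if $\mathcal{M}$ is $\omega$-saturated, then so are $\mathcal{M}\ominus\lr{v,v'}$ and $\mathcal{M}\oplus\lr{v,v'}$ — intuitively clear since modifying one edge is a definable (indeed, finitely-parametrized) change, so a type over the modified model translates uniformly into a type over the original, using the first-order translation machinery (Definition~\ref{def-translation}, Lemma~\ref{lemma:translation}) to rewrite formulas about $\mathcal{M}\ominus\lr{v,v'}$ as formulas about $\mathcal{M}$ with $v,v'$ named as constants.

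Next I would verify each Zig clause; the Zag clauses are symmetric. For \textbf{Zig}$_\blacklozenge$: suppose $R_1$-successor $v$ of $e(T)$ exists. The set of $\mathcal{L}$-theories realizable at $R_1$-successors of $e(T)$ matched against $e(T')$'s successors is handled exactly as in the classical Hennessy–Milner argument: consider $\Sigma=\{\mathcal{T}(\psi,\ldots)\mid \mathcal{M},T;v\vDash\psi\}$ translated into a 1-type over $\mathcal{N}'$ with the elements of $T'$ as constants and asserting the successor relation; finite satisfiability of this type in $\mathcal{N}'^{Y}$ follows because $\lr{\mathcal{N},T}\leftrightsquigarrow_l\lr{\mathcal{N}',T'}$ means $\mathcal{N}',T'\vDash\blacklozenge(\psi_1\wedge\cdots\wedge\psi_k)$ whenever $\mathcal{N},T\vDash\blacklozenge(\psi_1\wedge\cdots\wedge\psi_k)$, so $\omega$-saturation yields a witness $v'$; then $\lr{\mathcal{N},T;v}\leftrightsquigarrow_l\lr{\mathcal{N}',T';v'}$ by maximality of the chosen type, and both sides remain in the respective $\bf{U}^*$-closures. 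For \textbf{Zig}$_{\lr{-}_2}$ and \textbf{Zig}$_{\lr{+}}$: here the ``successor'' being witnessed is an \emph{edge} $\lr{u,v}$ rather than a node, so the relevant type is in two variables $\mathbf{x}=(x_1,x_2)$ and asserts the appropriate side-conditions ($R_1\wedge\neg R_2\wedge$ not in $Set(T)$ for $\lr{-}_2$; $R_2\wedge\neg R_1$ for $\lr{+}$) together with $\{\mathcal{T}(\psi,\ldots)\}$ for all $\psi$ true after the modification; again finite satisfiability comes from $\leftrightsquigarrow_l$ applied to formulas $\lr{-}_2(\psi_1\wedge\cdots)$ resp.\ $\lr{+}(\psi_1\wedge\cdots)$, and $\omega$-saturation of $\mathcal{N}'$ supplies the witnessing edge $\lr{u',v'}$. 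For \textbf{Zig}$_{\lr{-}_1}$ the same works with the extra bookkeeping that the evaluation sequence is truncated to $T|_{\lr{u,v}}$ on one side and $T'|_{\lr{u',v'}}$ on the other — one must check these truncations correspond, which follows because $\lr{-}_1$'s semantics already pins down $S|_{\lr{v,v'}}$ from $\lr{v,v'}$, and the matched edge has the same $\mathcal{L}$-theory profile.

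The main obstacle I anticipate is precisely the preservation of $\omega$-saturation under $\ominus$ and $\oplus$: one has to be careful that realizing a $\Gamma(\mathbf{x})$ over the \emph{modified} model really does reduce to realizing a related set over the original $\omega$-saturated model. The clean way is to push everything through the first-order side: by Theorem~\ref{theorem-correctnessoftranslation} and Lemma~\ref{lemma:translation}, every $\mathcal{L}$-assertion about a $\bf{U}^*$-reachable $\lr{\mathcal{N},T}$ is equivalent to a first-order assertion about the \emph{fixed} ambient $\mathcal{M}$ with finitely many elements named as constants (those occurring in $T$, in the added set $E^+$, and in the deleted set $E^-$). Hence an $\mathcal{L}$-type over $\lr{\mathcal{N},T}$ is, after translation, a first-order type over $\mathcal{M}^{Y}$ for a finite $Y\subseteq W$, and $\omega$-saturation of $\mathcal{M}$ applies directly. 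This reframing also makes the Hennessy–Milner steps above uniform: instead of re-deriving saturation transfer in each clause, I invoke the single statement ``$\bf{U}^*$-reachable pointed models of an $\omega$-saturated model behave, type-theoretically, like $\mathcal{M}$ with finitely many constants added,'' and then the argument is the textbook one. A secondary, purely combinatorial nuisance is tracking the side-conditions (membership in $Set(T)$, disjointness of $E^+$ and $E^-$) when writing the two-variable types, but these are exactly the conjuncts already present in Definition~\ref{def-translation}, so no new ideas are needed there.
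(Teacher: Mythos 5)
Your proposal is correct and follows essentially the same route as the paper: the paper also proves the theorem by showing that $\leftrightsquigarrow_l$ (on $\bf{U}^*$-reachable pairs) is itself an l-bisimulation, verifying each Zig/Zag clause by a Hennessy--Milner-style type argument that uses the first-order translation of Definition~\ref{def-translation} together with Lemma~\ref{lemma:translation} and Theorem~\ref{theorem-correctnessoftranslation} to re-express assertions about modified models as first-order assertions over the fixed ambient $\omega$-saturated model, exactly as in your ``push everything through the first-order side'' step. The saturation-transfer issue you flag is handled implicitly in the paper by the same device of keeping the ambient model fixed and recording modifications in the parameters $E^+$ and $E^-$.
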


\begin{proof}
We prove this by showing that $\leftrightsquigarrow_l$ itself is an l-bisimulation. In what follows, assume that $E'$ is an $R'_1$-sequence of variables with the same size as $S'$.

(1). For each $p\in\textbf{P}$, by the definition of $\leftrightsquigarrow_l$, it holds directly that $\mathcal{M},S\vDash p$ iff $\mathcal{M}',S'\vDash p$. This satisfies the condition of \textbf{Atom}. 

(2). Let $v\in R_1(S)$. We will prove that there is some $v'\in R'_1(S')$ with $\lr{\mathcal{M},S;v}\leftrightsquigarrow_l\lr{\mathcal{M}',S';v'}$. For any finite $\Gamma\subseteq\mathbb{T}^l(\mathcal{M},S;v)$, the following equivalences hold:

\begin{align*}
\mathcal{M},S\vDash\blacklozenge\bigwedge\Gamma &\;\;\Leftrightarrow\;\;  \mathcal{M}',S'\vDash\blacklozenge\bigwedge\Gamma\\
&\;\;\Leftrightarrow\;\;\mathcal{M}'\vDash\mathcal{T}(\blacklozenge\bigwedge\Gamma,E',\emptyset,\emptyset)[E':=S']\\
&\;\;\Leftrightarrow\;\;\mathcal{M}'\vDash\exists y(R'_1(E')y\land\mathcal{T}(\bigwedge\Gamma,E';y,\emptyset,\emptyset))[E':=S']
\end{align*}
The first equivalence holds by the assumption that $\lr{\mathcal{M},S}\leftrightsquigarrow_l\lr{\mathcal{M}',S'}$. The second one follows from Theorem \ref{theorem-correctnessoftranslation} and the last one from Definition \ref{def-translation}.

Since $\lr{\mathcal{M}',S'}$ is $\omega$-saturated, there exists some $y\in R'_1(E')$ such that $\mathcal{M}'\vDash\mathcal{T}(\mathbb{T}^l(\mathcal{M},S;v),E';y,\emptyset,\emptyset)[E':=S']$. Again by Theorem \ref{theorem-correctnessoftranslation}, there exists $v'\in R'_1(S')$ such that $\lr{\mathcal{M},S;v}\leftrightsquigarrow_l\lr{\mathcal{M}',S';v'}$. Now the proof of the {\rm{\textbf{Zig$_{\blacklozenge}$}}} clause is completed.

(3). Similar to (2), we can prove that the {\rm{\textbf{Zag$_{\blacklozenge}$}}} condition is satisfied.

(4). Let $\lr{u,v}\in Set(S)\setminus R_2$. We show that there exists $\lr{u',v'}\in Set(S')\setminus R'_2$ such that $\lr{\mathcal{M}\ominus\lr{u,v},S|_{\lr{u,v}}}\leftrightsquigarrow_l\lr{\mathcal{M}'\ominus\lr{u',v'},S'|_{\lr{u',v'}}}$. Let $\Gamma$ be a finite subset of $\mathbb{T}^l(\mathcal{M}\ominus\lr{u,v},S|_{\lr{u,v}})$, then we have the  following equivalences:
\begin{align*}
\mathcal{M},S\vDash\lr{-}_1\bigwedge\Gamma &\;\;\Leftrightarrow\;\; \mathcal{M}',S'\vDash\lr{-}_1\bigwedge\Gamma\\
&\;\;\Leftrightarrow\;\;\mathcal{M}'\vDash\mathcal{T}(\lr{-}_1\bigwedge\Gamma,E',\emptyset,\emptyset)[E':=S']\\
&\;\;\Leftrightarrow\;\;\mathcal{M}'\vDash\exists y\exists z(\bigvee\limits_{\lr{x,x'}\in Set(E')}(y\equiv x\land z\equiv x')\land\neg R'_2yz\land\\
&\qquad\qquad\quad\; \mathcal{T}(\bigwedge\Gamma,E'|_{\lr{y,z}},\emptyset,\{\lr{y,z}\}))[E':=S']
\end{align*}
The first equivalence holds straightforward from the  assumption of learning modal equivalence between $\lr{\mathcal{M},S}$ and $\lr{\mathcal{M}',S'}$. The second one follows from Theorem \ref{theorem-correctnessoftranslation}, and the third equivalence holds by Definition \ref{def-translation}.

Since $\lr{\mathcal{M}',S'}$ is $\omega$-saturated, there are $y,z$ such that $\lr{y,z}\in Set(E')\setminus R'_2$ and $\mathcal{M}'\vDash\mathcal{T}(\mathbb{T}^l(\mathcal{M}\ominus\lr{u,v},S|_{\lr{u,v}}),E'|_{\lr{y,z}},\emptyset,\{\lr{y,z}\}))[E':=S']$. W.l.o.g., assume that $y$ and $z$ are assigned to $u'$ and $v'$ respectively. By Lemma \ref{lemma:translation}, we have $\lr{u',v'}\in Set(S')\setminus R'_2$ with $\mathcal{M}'\ominus\lr{u',v'}\vDash\mathcal{T}(\mathbb{T}^l(\mathcal{M}\ominus\lr{u,v},S|_{\lr{u,v}}),E'|_{\lr{y,z}},\emptyset,\emptyset))[E':=S']$. By Theorem \ref{theorem-correctnessoftranslation}, $\mathcal{M}'\ominus\lr{u',v'},S'|_{\lr{u',v'}}\vDash\mathbb{T}^l(\mathcal{M}\ominus\lr{u,v},S|_{\lr{u,v}})$. Consequently, we have $\lr{\mathcal{M}\ominus\lr{u,v},S|_{\lr{u,v}}}\leftrightsquigarrow_l\lr{\mathcal{M}'\ominus\lr{u',v'},S'|_{\lr{u',v'}}}$. Therefore, the proof of the {\rm{\textbf{Zig$_{\lr{-}_1}$}}} clause is completed.

(5). Similar to (4), we can prove that the {\rm{\textbf{Zag$_{\lr{-}_1}$}}} condition is satisfied.

(6). Assume that $\lr{u,v}\in (R_1\setminus R_2)\setminus Set(S)$. We now prove that there exists $\lr{u',v'}\in (R'_1\setminus R'_2)\setminus Set(S')$ s.t. $\lr{\mathcal{M}\ominus\lr{u,v},S}\leftrightsquigarrow_l\lr{\mathcal{M}'\ominus\lr{u',v'},S'}$. For any finite $\Gamma\subseteq \mathbb{T}^l(\mathcal{M}\ominus\lr{u,v},S)$, it holds that:
\begin{align*}
\mathcal{M},S\vDash\lr{-}_2\bigwedge\Gamma &\;\;\Leftrightarrow\;\; \mathcal{M}',S'\vDash\lr{-}_2\bigwedge\Gamma\\
&\;\;\Leftrightarrow\;\;\mathcal{M}'\vDash\mathcal{T}(\lr{-}_2\bigwedge\Gamma,E',\emptyset,\emptyset)[E':=S']\\
&\;\;\Leftrightarrow\;\;\mathcal{M}'\vDash\exists y\exists z(\neg\bigvee\limits_{\lr{x,x'}\in Set(E')}(y\equiv x\land z\equiv x')\land R_1yy'\land\\
&\qquad\qquad\quad\;\neg R_2yy'\land \mathcal{T}(\bigwedge\Gamma,E',\emptyset,\{\lr{y,z}\}))[E':=S']
\end{align*}
The first equivalence follows from $\lr{\mathcal{M},S}\leftrightsquigarrow_l\lr{\mathcal{M}',S'}$ directly. The second holds from Theorem \ref{theorem-correctnessoftranslation}, and the third equivalence holds by Definition \ref{def-translation}.

As $\lr{\mathcal{M}',S'}$ is $\omega$-saturated, there are $y,z$ with $\lr{y,z}\in (R'_1\setminus R'_2)\setminus Set(E')$ and $\mathcal{M}'\vDash\mathcal{T}(\mathbb{T}^l(\mathcal{M}\ominus\lr{u,v},S),E',\emptyset,\{\lr{y,z}\}))[E':=S']$. W.l.o.g., assume that $y$ and $z$ are assigned to $u'$ and $v'$ respectively. From Lemma \ref{lemma:translation}, we know $\lr{u',v'}\in(R'_1\setminus R'_2)\setminus Set(S')$ and $\mathcal{M}'\ominus\lr{u',v'}\vDash\mathcal{T}(\mathbb{T}^l(\mathcal{M}\ominus\lr{u,v},S),E',\emptyset,\emptyset))[E':=S']$. By Theorem \ref{theorem-correctnessoftranslation},  it holds that $\mathcal{M}'\ominus\lr{u',v'},S'\vDash\mathbb{T}^l(\mathcal{M}\ominus\lr{u,v},S)$. So, we have $\lr{\mathcal{M}\ominus\lr{u,v},S}\leftrightsquigarrow_l\lr{\mathcal{M}'\ominus\lr{u',v'},S'}$. Now the proof of the {\rm{\textbf{Zig$_{\lr{-}_2}$}}} clause is completed.

(7). Similar to (6), we can show that {\rm{\textbf{Zag}$_{\lr{-}_2}$}} is also satisfied.

(8). Let $\lr{u,v}\in R_2\setminus R_1$. We now prove that the {\rm{\textbf{Zig}$_{\lr{+}}$}} condition is satisfied. Assume that $\Gamma$ is a finite subset of $\mathbb{T}^l(\mathcal{M}\oplus\lr{u,v},S)$. Then the following sequences hold:
\begin{align*}
\mathcal{M},S\vDash\lr{+}\bigwedge\Gamma&\Leftrightarrow \mathcal{M}',S'\vDash\lr{+}\bigwedge\Gamma\\
&\Leftrightarrow\mathcal{M}'\vDash\mathcal{T}(\lr{+}\bigwedge\Gamma,E',\emptyset,\emptyset)[E':=S']\\
&\Leftrightarrow\mathcal{M}'\vDash\exists y\exists z(R'_2yz\land\neg  R'_1yz\land\\
&\qquad\qquad\;\mathcal{T}(\bigwedge\Gamma,E',\{\lr{y,z}\},\emptyset))[E':=S']
\end{align*}
The first equivalence holds by $\lr{\mathcal{M},S}\leftrightsquigarrow_l\lr{\mathcal{M}',S'}$. The second one follows from Theorem \ref{theorem-correctnessoftranslation}, and the third equivalence holds by Definition \ref{def-translation}.

Note that $\lr{\mathcal{M}',S'}$ is $\omega$-saturated, hence there are $y,z$ such that $\lr{y,z}\in R'_2\setminus R'_1$ and $\mathcal{M}'\vDash\mathcal{T}(\mathbb{T}^l(\mathcal{M}\oplus\lr{u,v},S),E',\{\lr{y,z}\},\emptyset))[E':=S']$. W.l.o.g, assume that $y$ and $z$ are assigned to $u'$ and $v'$ respectively. By Lemma \ref{lemma:translation}, $\lr{u',v'}\in R'_2\setminus R'_1$ and $\mathcal{M}'\oplus\lr{u',v'}\vDash\mathcal{T}(\mathbb{T}^l(\mathcal{M}\oplus\lr{u,v},S),E',\emptyset,\emptyset))[E':=S']$. By Theorem \ref{theorem-correctnessoftranslation}, it follows that $\mathcal{M}'\oplus\lr{u',v'},S'\vDash\mathbb{T}^l(\mathcal{M}\oplus\lr{u,v},S)$. So, we have $\lr{\mathcal{M}\oplus\lr{u,v},S}\leftrightsquigarrow_l\lr{\mathcal{M}'\oplus\lr{u',v'},S'}$. Now the proof of the {\rm{\textbf{Zig$_{\lr{+}}$}}} clause is completed.

(9). Similar to (8), we can show the {\rm{\textbf{Zag}$_{\lr{+}}$}} condition is satisfied.
\qed
\end{proof}

Thus we have established a match between learning modal equivalence and learning bisimulation for the $\omega$-saturated models. Now, by a simple adaptation of standard arguments (cf. \cite{modallogic,sabotage}), we can show the following result:

\begin{theorem}\label{theorem-characterization}
For any $\alpha(x)\in\mathcal{L}_1$ with only one free variable, 
$\alpha(x)$ is equivalent to the translation of some $\mathcal{L}$-formula $\varphi$ iff $\alpha(x)$ is invariant under l-bisimulation.
\end{theorem}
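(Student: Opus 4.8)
The plan is to adapt the classical van Benthem characterization argument (cf.\ \cite{modallogic,sabotage}), using the three facts already in hand: correctness of the translation (Theorem~\ref{theorem-correctnessoftranslation}), l-bisimulation invariance of $\mathcal{L}$ (Theorem~\ref{theorem-bisimtoequiv}), and its converse for $\omega$-saturated models (Theorem~\ref{theorem-omegaequivtobisim}), together with first-order compactness and the standard existence of $\omega$-saturated elementary extensions. Throughout, the relevant structures are the pointed models in $\mathfrak{M}^\bullet$, i.e.\ ordinary $\mathcal{L}_1$-structures with a distinguished element, so that ``the translation of an $\mathcal{L}$-formula $\psi$'' means $\mathcal{T}(\psi,\lr{x},\emptyset,\emptyset)$, a formula with the single free variable $x$.

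For the easy direction, suppose $\alpha(x)$ is equivalent to $\mathcal{T}(\varphi,\lr{x},\emptyset,\emptyset)$ for some $\varphi\in\mathcal{L}$. If $\lr{\mathcal{M},w}\underline{\leftrightarrow}_l\lr{\mathcal{M}',w'}$, then $\lr{\mathcal{M},w}\leftrightsquigarrow_l\lr{\mathcal{M}',w'}$ by Theorem~\ref{theorem-bisimtoequiv}, hence $\mathcal{M},w\vDash\varphi$ iff $\mathcal{M}',w'\vDash\varphi$; by Theorem~\ref{theorem-correctnessoftranslation} this is exactly $\mathcal{M}\vDash\alpha[w]$ iff $\mathcal{M}'\vDash\alpha[w']$, so $\alpha(x)$ is invariant under l-bisimulation.

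For the hard direction, assume $\alpha(x)$ is invariant under l-bisimulation, and set $\Sigma:=\{\mathcal{T}(\psi,\lr{x},\emptyset,\emptyset)\mid\psi\in\mathcal{L}\text{ and }\vDash\alpha(x)\to\mathcal{T}(\psi,\lr{x},\emptyset,\emptyset)\}$, the set of ``SLL-consequences'' of $\alpha$. Since the translation commutes with conjunction (Definition~\ref{def-translation}), it suffices to establish $\Sigma\vDash\alpha(x)$: compactness then gives finitely many members of $\Sigma$, say the translations of $\psi_1,\dots,\psi_k$, that already entail $\alpha$, and since $\vDash\alpha(x)\to\mathcal{T}(\psi_1\wedge\cdots\wedge\psi_k,\lr{x},\emptyset,\emptyset)$ by construction, $\alpha$ is equivalent to the translation of $\psi_1\wedge\cdots\wedge\psi_k$. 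To prove $\Sigma\vDash\alpha(x)$, take any $\mathcal{M}$ with $\mathcal{M}\vDash\Sigma[w]$. First I would show $\mathbb{T}^l(\mathcal{M},w)\cup\{\alpha(x)\}$ is finitely satisfiable: otherwise compactness yields a finite $\Delta\subseteq\mathbb{T}^l(\mathcal{M},w)$ with $\vDash\alpha(x)\to\mathcal{T}(\neg\bigwedge\Delta,\lr{x},\emptyset,\emptyset)$, so $\mathcal{T}(\neg\bigwedge\Delta,\lr{x},\emptyset,\emptyset)\in\Sigma$, whence $\mathcal{M},w\vDash\neg\bigwedge\Delta$ by Theorem~\ref{theorem-correctnessoftranslation}, contradicting $\bigwedge\Delta\in\mathbb{T}^l(\mathcal{M},w)$. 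So there is $\lr{\mathcal{N},v}$ with $\mathcal{N},v\vDash\mathbb{T}^l(\mathcal{M},w)\cup\{\alpha(x)\}$; as $\mathbb{T}^l(\mathcal{M},w)$ is negation-complete, this forces $\lr{\mathcal{M},w}\leftrightsquigarrow_l\lr{\mathcal{N},v}$. Next I would pass to $\omega$-saturated elementary extensions $\mathcal{M}\preceq\mathcal{M}^+$ and $\mathcal{N}\preceq\mathcal{N}^+$, keeping $w,v$ fixed; elementary extension preserves every first-order formula, in particular every $\mathcal{T}(\psi,\lr{x},\emptyset,\emptyset)$, so $\lr{\mathcal{M}^+,w}\leftrightsquigarrow_l\lr{\mathcal{N}^+,v}$ and still $\mathcal{N}^+\vDash\alpha[v]$. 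Then Theorem~\ref{theorem-omegaequivtobisim} upgrades this to $\lr{\mathcal{M}^+,w}\underline{\leftrightarrow}_l\lr{\mathcal{N}^+,v}$, invariance of $\alpha$ transfers $\alpha$ to $\lr{\mathcal{M}^+,w}$, and $\mathcal{M}\preceq\mathcal{M}^+$ yields $\mathcal{M}\vDash\alpha[w]$, as required.

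All the work specific to SLL is packed into the three theorems cited, so the main obstacle is really the careful bookkeeping of the model-theoretic plumbing: pinning down that the intended semantic class for a one-free-variable formula is $\mathfrak{M}^\bullet$ (so that the translations with $E=\lr{x}$ are the right targets), and invoking the standard facts that $\omega$-saturated elementary extensions exist and that $\leftrightsquigarrow_l$ is preserved under them --- the latter being automatic, since $\leftrightsquigarrow_l$ is first-order expressible through $\mathcal{T}$. Beyond that the argument follows the routine compactness-and-saturation template, so I do not anticipate a genuine difficulty.
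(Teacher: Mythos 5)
Your proposal is correct and follows essentially the same route as the paper: the same set of SLL-consequences of $\alpha$ (the paper's $\mathbb{C}_l(\alpha)$, your $\Sigma$), the same compactness argument for the consistency of $\mathbb{T}^l(\mathcal{M},w)\cup\{\alpha\}$, the same passage to $\omega$-saturated elementary extensions combined with Theorems~\ref{theorem-bisimtoequiv} and~\ref{theorem-omegaequivtobisim}, and the same final compactness-plus-deduction step. Your write-up is if anything slightly more explicit than the paper's (e.g.\ in noting that $\mathbb{T}^l$ is negation-complete and that the translation commutes with conjunction), but there is no substantive difference in the argument.
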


\begin{proof}
The direction from left to right holds by Theorem \ref{theorem-bisimtoequiv} directly. We now consider the other direction. Let $\alpha$ be an $\mathcal{L}_1$-formula with only one free variable. Suppose that $\alpha$ is invariant under l-bisimulation. Define
$\mathbb{C}_l(\alpha):=\{\mathcal{T}(\varphi,x,\emptyset,\emptyset)\mid\varphi\in\mathcal{L}\ {\rm and}\ \alpha\vDash\mathcal{T}(\varphi,x,\emptyset,\emptyset)\}$. Note that any formula of $\mathbb{C}_l(\alpha)$ has only one free variable. We now show $\mathbb{C}_l(\alpha)\vDash\alpha$, i.e., $\mathcal{M}\vDash\mathbb{C}_l(\alpha)[x:=w]$ entails $\mathcal{M}\vDash\alpha[x:=w]$ for any $\lr{\mathcal{M},w}\in\mathfrak{M}^\bullet$. To do so, we first prove that the set $\Sigma=\mathcal{T}((\mathbb{T}^l(\mathcal{M},w),x,\emptyset,\emptyset))\cup\{\alpha\}$ is consistent.

Suppose that $\Sigma$ is not consistent. By the compactness of first-order logic, it holds that $\vDash\alpha\to\neg\bigwedge\Gamma$ for some finite $\Gamma\subseteq\mathcal{T}(\mathbb{T}^l(\mathcal{M},w),x,\emptyset,\emptyset)$. Then from the definition of $\mathbb{C}_l(\alpha)$, we know $\neg\bigwedge\Gamma\in\mathbb{C}_l(\alpha)$, which is followed by $\neg\bigwedge\Gamma\in \mathcal{T}(\mathbb{T}^l(\mathcal{M},w),x,\emptyset,\emptyset)$. However, it contradicts to $\Gamma\subseteq \mathcal{T}(\mathbb{T}^l(\mathcal{M},w),x,\emptyset,\emptyset)$. 

Now we show $\mathcal{M}\vDash\alpha[x:=w]$. Since $\Sigma$ is consistent, there exists some $\lr{\mathcal{M}',w'}\in\mathfrak{M}^\bullet$ s.t. $\mathcal{M}'\vDash\Sigma[x:=w]$. Consequently, $\lr{\mathcal{M},w}\leftrightsquigarrow_l\lr{\mathcal{M}',w'}$. Now take two $\omega$-saturated elementary extensions $\lr{\mathcal{M}_{\omega},w}$ and $\lr{\mathcal{M}'_{\omega},w'}$ of $\lr{\mathcal{M},w}$ and $\lr{\mathcal{M}',w'}$ respectively. It can be shown that such extensions always exist (see \cite{modeltheory}). By the invariance of first-order logic under elementary extensions, from $\mathcal{M}'\vDash\alpha[x:=w']$ we know $\mathcal{M}'_{\omega}\vDash\alpha[x:=w']$. Moreover, by Theorem \ref{theorem-omegaequivtobisim} and the assumption that $\alpha$ is invariant for l-bisimulation, we have $\mathcal{M}_{\omega}\vDash\alpha[x:=w]$. By the elementary extension, we obtain $\mathcal{M}\vDash\alpha[x:=w]$. Therefore, it holds that $\mathbb{C}_l(\alpha)\vDash\alpha$.

Finally, we show that $\alpha$ is equivalent to the translation of an $\mathcal{L}$-formula. Since $\mathbb{C}_l(\alpha)\vDash\alpha$, by the compactness and deduction theorems of first-order logic it holds that $\vDash\bigwedge\Gamma\to\alpha$ for some finite subset $\Gamma$ of $\mathbb{C}_l(\alpha)$. Besides, by the definition of $\mathbb{C}_l(\alpha)$, we have $\vDash\alpha\to\bigwedge\Gamma$. Thus, $\vDash\alpha\leftrightarrow\bigwedge\Gamma$. Now the proof is completed.
\qed
\end{proof}

Therefore, in terms of the expressivity, SLL is as powerful as the one free variable fragment of first-order logic that is invariant for l-bisimulation.

\section{Model Checking and Satisfiability for SLL}
\label{sec:undecidable}
In this section, we consider the the model checking problem and satisfiability problem for SLL. Fortunately, the results that we have already shown are quite helpful to establish the complexity result for its model checking problem. First of all, as noted in \cite{changeoperator}, it holds that:

\begin{theorem}\label{theorem: model checking bridge}
Model checking for BML is PSPACE-complete (see \cite{changeoperator}).
\end{theorem}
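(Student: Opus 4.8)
The plan is to appeal to \cite{changeoperator}, where model checking for the relation-changing modal logics --- including the logic \textbf{BML} obtained by adding the bridge operator $\lr{+}$ to basic modal logic --- is analysed; it suffices to recall the shape of that argument, which has the two usual halves: membership in PSPACE and PSPACE-hardness.

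For the upper bound I would exhibit a recursive model-checking procedure running in polynomial space. On input a finite model $\mathcal{M}$, a world $w$ and a formula $\varphi$, recurse on the structure of $\varphi$: the Boolean cases are immediate; for $\Diamond\psi$ loop over the $R$-successors $v$ of $w$ and recursively check $\psi$ at $v$ in $\mathcal{M}$; for $\lr{+}\psi$ loop over the pairs $\lr{u,v}$ not currently in $R$, pass to $\mathcal{M}\oplus\lr{u,v}$, and recursively check $\psi$ at $w$ there; the dual operators $\Box$ and $[+]$ are handled by the corresponding universal loops. The space bound then follows from two observations: each recursive call consumes one modal operator, so the recursion depth is at most $|\varphi|$; and along any single branch of the recursion at most one edge is added per nested $\lr{+}$, so every intermediate model stays polynomial in $|\mathcal{M}|+|\varphi|$ (indeed the added edges can be kept on a stack rather than copying the whole model). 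Hence the computation uses only polynomial space.

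For the lower bound I would reduce from the truth of quantified Boolean formulas. Given $Q_1 x_1\cdots Q_n x_n\,\chi$ with $\chi$ in CNF, one builds in polynomial time a gadget model in which a truth assignment to the $x_i$ is encoded by the presence or absence of designated edges, so that a block of $\lr{+}$ and $[+]$ modalities faithfully simulates the quantifier prefix ($\exists$ as ``there is a way to add the witnessing edge'', $\forall$ as its dual) while a bounded-modal-depth formula over the gadget reads off the truth value of $\chi$ under the assignment so chosen; the constructed formula holds at the designated world iff the quantified Boolean formula is true. The delicate point --- and the step I expect to be the main obstacle --- is precisely this gadget: it must be designed so that the relation-changing modalities can toggle only the intended ``variable'' edges, never creating spurious links that would let the evaluation of $\chi$ cheat, and so that the left-to-right order of the quantifier prefix is respected. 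This bookkeeping is exactly what is carried out in \cite{changeoperator}, so I would cite that construction rather than reproduce it here.
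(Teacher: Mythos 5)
The paper does not prove this theorem itself; it simply imports the result from \cite{changeoperator}, and your proposal ultimately does the same, while adding a correct outline of the standard argument (PSPACE membership via a depth-bounded recursion whose intermediate models grow by at most one edge per nested $\lr{+}$, and PSPACE-hardness via a QBF reduction whose gadget you rightly defer to the cited construction). This matches the intended justification, so nothing further is needed.
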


By this result, we now can show that model checking for SLL is also PSPACE-complete.

\begin{theorem}\label{theorem: model checking SLL}
Model checking for SLL is PSPACE-complete.
\end{theorem}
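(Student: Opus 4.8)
The plan is to prove the two inclusions separately. For PSPACE-hardness I would give a polynomial-time many-one reduction from model checking for BML, which is PSPACE-complete by Theorem~\ref{theorem: model checking bridge}. Given a BML model $\lr{W,R,V}$, a world $w$ and a BML formula $\psi$, construct the SLL model $\mathcal{M}=\lr{W,R,W^2,V}$ (that is, set $R_1:=R$ and let $R_2$ be the total relation), take the evaluation sequence $\lr{w}$, and let $\varphi$ be the $\mathcal{L}_{\blacklozenge\lr{+}}$-formula obtained from $\psi$ by replacing every $\Diamond$ with $\blacklozenge$, so that $\varphi^\star=\psi$ in the notation of Proposition~\ref{proposition:bridge logic}. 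Since $R_2=W^2$, that proposition gives $\mathcal{M},\lr{w}\vDash\varphi$ iff $\lr{W,R,V},w\vDash\psi$. The construction is clearly computable in polynomial time ($W^2$ has size $|W|^2$ and $\varphi$ has the size of $\psi$), so model checking for SLL is PSPACE-hard.

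For membership in PSPACE I would reuse the first-order translation of Definition~\ref{def-translation}. The first point to check is that $\mathcal{T}(\varphi,E,\emptyset,\emptyset)$ has size polynomial in the input and is computable in polynomial time. The key observation is that along any branch of the recursion the auxiliary data grow only slowly: descending through a $\blacklozenge$ appends one variable to $E$, through a $\lr{+}$ adds one pair to $E^+$, and through a $\lr{-}_1$ or $\lr{-}_2$ adds one pair to $E^-$ (truncating $E$ in the $\lr{-}_1$ clause only shortens it). Hence on every root-to-leaf path of the syntax tree of $\varphi$ one has $|E|,|E^+|,|E^-|=O(|\varphi|+|S|)$, so each modal node contributes a first-order fragment of size $O(|\varphi|+|S|)$ and each Boolean node a fragment of constant size; summing over the $O(|\varphi|)$ nodes gives $|\mathcal{T}(\varphi,E,\emptyset,\emptyset)|=O(|\varphi|\cdot(|\varphi|+|S|))$, which is polynomial. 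By Theorem~\ref{theorem-correctnessoftranslation}, $\mathcal{M},S\vDash\varphi$ iff the finite structure $\mathcal{M}$, read as an $\mathcal{L}_1$-structure interpreting $R_1$, $R_2$ and the finitely many unary predicates occurring in $\varphi$, satisfies $\mathcal{T}(\varphi,E,\emptyset,\emptyset)$ under the assignment $E:=S$. Since first-order model checking is in PSPACE and both the formula and the structure are of polynomial size, this places model checking for SLL in PSPACE, and together with the previous paragraph finishes the proof.

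An alternative route to membership avoids the translation: evaluate $\varphi$ at $\lr{\mathcal{M},S}$ directly by the recursion built into the semantics (Definition~\ref{def-semantics}), treating $\blacklozenge,\lr{-}_1,\lr{-}_2,\lr{+}$ as bounded existential branchings; the recursion depth is at most $|\varphi|$ (each call is on a proper subformula), each stack frame stores only the current --- possibly modified --- finite model, the current sequence and a loop index, and the sequence never exceeds length $|S|+|\varphi|$ since only $\blacklozenge$ lengthens it, so the whole computation runs in polynomial space. In either route the only point requiring genuine care is this size control --- bounding $|E|,|E^+|,|E^-|$ in the translation, equivalently bounding the length of the sequences $S^n$ reachable during evaluation; once that is in place the rest is routine, and the hardness half is immediate from Proposition~\ref{proposition:bridge logic} and Theorem~\ref{theorem: model checking bridge}.
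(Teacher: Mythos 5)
Your proof is correct and follows essentially the same route as the paper: PSPACE-membership via the first-order translation of Definition~\ref{def-translation} together with the PSPACE bound for first-order model checking, and PSPACE-hardness by reducing BML model checking through Proposition~\ref{proposition:bridge logic} with $R_2:=W^2$. The only difference is that you spell out the polynomial size bound on the translation (and sketch a direct polynomial-space evaluation as an alternative), which the paper merely asserts.
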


\begin{proof}
An upper bound can be established bt the first-order translation given by Definition \ref{def-translation}, which has only a polynomial size increase. It is well-known that model checking for FOL is in PSPACE.

Besides, an lower bound can be provided with the help of a translation $f$ from the bridge modal logic into the fragment $\mathcal{L}_{\blacklozenge\lr{+}}$. More precisely, $f$ is the reverse of the translation defined in Proposition \ref{proposition:bridge logic}. Clearly, the translation $f$ also has a polynomial size increase. Besides, let $\lr{W,R_1,V}$ be a standard relational model and $w\in W$. It is not hard to see that $\lr{W,R_1,V},w\vDash\varphi$ iff $\lr{W,R_1,W^2,V},w\vDash f(\varphi)$ (recall Proposition \ref{proposition:bridge logic}). From Theorem \ref{theorem: model checking bridge}, we know that model checking for SLL is PSPACE-hard.

Therefore, model checking for SLL is PSPACE-complete.
\qed
\end{proof}

Note that Theorem \ref{theorem: model checking SLL} also establishes an upper bound for the complexity of SLG. Now we move to considering the satisfiability problem for SLL. In particular, we have the following result:

\begin{theorem}\label{theorem: fmp of fragement}
$\mathcal{L}_{\blacklozenge\lr{-}_1}$ does not enjoy the finite model property.
\end{theorem}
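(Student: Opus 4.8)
The plan is to show that $\mathcal{L}_{\blacklozenge\lr{-}_1}$ lacks the finite model property by exhibiting a single formula $\psi_{\infty}\in\mathcal{L}_{\blacklozenge\lr{-}_1}$ that is satisfiable with respect to $\mathfrak{M}^\bullet$ yet is falsified in every finite model; since the finite model property would require every satisfiable formula to have a finite model, this suffices. The behaviour $\psi_\infty$ should force is, from the evaluation point $w_0$, an $R_1$-path $w_0 R_1 w_1 R_1 w_2 R_1 \cdots$ all of whose links lie in $R_1\setminus R_2$, which keeps extending through suitably marked nodes, and along which no link is ever traversed twice. An infinite such path uses infinitely many distinct edges, so $R_1$, and hence $W$, must be infinite; equivalently, a finite frame has only finitely many edges, so the pigeonhole principle applied to a finite model of $\psi_\infty$ would produce a repeated link and contradict the formula.

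Concretely, I would assemble $\psi_\infty$ from conjuncts in the spirit of the formulas $T_1,T_2,T_3$ used in the proof of Proposition \ref{proposition: tmp of fragement}: a \emph{root} conjunct fixing $w_0$ (say, $w_0\vDash p$ with a $p$-successor reached by a non-$R_2$ edge); a \emph{propagation} conjunct forcing the $p$-path to keep going; and a \emph{freshness} conjunct — the heart of the matter — using $\lr{-}_1$ to forbid the path from ever closing a loop. The feature of $\lr{-}_1$ to exploit is its truncation semantics: from an evaluation sequence $S$ it deletes a link lying in $Set(S)\setminus R_2$ and rewinds $S$ to the prefix ending at the \emph{first} occurrence of that link. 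At the moment a loop first closes — i.e. at a sequence $\lr{w_0,\dots,w_n}$ whose last link $\lr{w_{n-1},w_n}$ coincides with an earlier $\lr{w_{k-1},w_k}$ — applying $\lr{-}_1$ to that link returns us to $\lr{w_0,\dots,w_{k-1}}$ in the model with that edge removed, and the freshness conjunct should be engineered so that in this reduced pointed model the relevant node is provably starved of the successor the conjunct itself re-demands, delivering $\bot$. To make the loop-detection tractable I expect one must additionally force the $(R_1\setminus R_2)$-reduct on the reachable $p$-nodes to be functional and $w_0$ to have no $p$-predecessor among them, so that any loop is a loop back to $w_0$ and is caught by the cut just described.

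For satisfiability I would exhibit the intended infinite witness directly — e.g. $\mathcal{N}=\lr{\mathbb{N},R_1,R_2,V}$ with $R_1$ the successor relation on $\mathbb{N}$ (plus whatever auxiliary edges the freshness conjunct needs in order for a cut to be ``observable''), $R_2=\emptyset$, and $V$ making the distinguished atoms behave as intended — and then verify $\mathcal{N},\lr{0}\vDash\psi_\infty$ conjunct by conjunct; since $\mathcal{N}$ is infinite and no finite model satisfies $\psi_\infty$, the fragment lacks the finite model property. The main obstacle is the design of the propagation and freshness conjuncts together: $\mathcal{L}_{\blacklozenge\lr{-}_1}$ has no reflexive–transitive (``everywhere'') modality, so both requirements — that the path keep extending through $p$-nodes forever and that no link ever be reused — have to be pushed along the entire, \emph{unbounded} evaluation path using only nested $\blacksquare$'s and $\lr{-}_1$'s, i.e. the unboundedness must be smuggled in through the bounded path-memory of $\lr{-}_1$ (the rewinding being the only mechanism by which a formula can ``look back'' along the path). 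One must also ensure that (a) the auxiliary edges added so that $\lr{-}_1$ can detect a loop do not inadvertently admit a finite model of $\psi_\infty$, and (b) the rewind-to-first-occurrence behaviour is matched precisely, so that the cut performed when a loop first closes really lands on a node that is demonstrably without successors. This delicacy is exactly what makes the result a natural precursor to the undecidability of the satisfiability problem for SLL treated in the remainder of the section.
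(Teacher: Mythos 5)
Your overall strategy --- exhibit a single $\mathcal{L}_{\blacklozenge\lr{-}_1}$-formula satisfiable only in infinite models --- is also the paper's, but the mechanism you propose for forcing infinity does not work, and the obstacle you flag at the end as a ``delicacy'' is in fact a dead end. Any formula $\psi_\infty$ of $\mathcal{L}_{\blacklozenge\lr{-}_1}$ contains only finitely many occurrences of $\blacklozenge$, say $d$ of them; since each $\blacklozenge$ lengthens the evaluation sequence by one element and $\lr{-}_1$ only ever shortens it, the truth value of $\psi_\infty$ at $\lr{w_0}$ depends only on evaluation sequences of length at most $d+1$. Consequently your ``propagation'' conjunct can force the $p$-path to extend for at most $d$ steps, and your ``freshness'' conjunct can only detect a loop that closes within those $d$ steps: a finite model consisting of a single $R_1$-cycle of length $d+2$ through $p$-nodes (with $R_2=\emptyset$) would satisfy every conjunct you describe, because no evaluation sequence reachable from $\lr{w_0}$ ever contains a repeated link, so $\lr{-}_1$ never has the offending edge in $Set(S)$ available to cut. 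The unboundedness cannot be ``smuggled in through the bounded path-memory of $\lr{-}_1$'': that memory is bounded by the syntax of the formula, uniformly over all models, so no choice of propagation and freshness conjuncts can close this gap.

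The paper escapes this with a different architecture, the spy-point technique. In its formula $\varphi_\infty$ (conjuncts $(F_1)$--$(F_7)$, $(\textit{Spy})$, $(\textit{Irr})$, $(\textit{No-3cyc})$, $(\textit{Trans})$), the evaluation point $w$ sees every relevant $p$-point in one $R_1$-step and is seen back by each of them, so global constraints on the set $B=R_1(w)\cap V(p)$ become expressible at bounded modal depth; the operator $\lr{-}_1$ is used not to detect long loops but to single out one link of the current (bounded-length) evaluation sequence and test what survives its deletion, which is how irreflexivity, the absence of $2$- and $3$-cycles, and transitivity of $R_1$ on $B$ are pinned down. Infinity then follows from an order-theoretic fact --- a strict partial order in which every element has a proper successor is infinite --- rather than from an unbounded non-repeating path. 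To repair your proof you would need to add the spy-point conjuncts (every relevant $p$-point reachable in one or two steps sees $w$ and is seen by $w$ in one step) and replace path-borne loop-detection by the irreflexivity/transitivity/seriality package on $B$; the intended witness is then the model of Figure \ref{figure:infinite}, not $\mathbb{N}$ with the bare successor relation.
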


\begin{proof}
To prove this, we present a formula that can only be satisfied by some infinite models. Consider the following formulas:
\begin{align*}
(F_1) &&& p\land q\land\blacklozenge p\land\blacklozenge\neg p\land\blacksquare\neg q \\  
(F_2)&&& \blacksquare(p\to\blacklozenge q\land\blacklozenge\neg q\land\blacksquare p)\\
(F_3)&&& \blacksquare(p\to\blacksquare(q\to\blacksquare\neg q\land \blacklozenge\neg p))\\
(F_4)&&&\blacklozenge(\neg p\land \lr{-}_1\blacksquare(p\land\blacksquare(q\to\blacksquare p)))\\
(F_5)&&&\blacksquare(p\to\blacksquare(\neg q\to\blacklozenge q\land\blacklozenge\neg q\land\blacksquare p))\\
(F_6)&&& \blacksquare(p\to\blacksquare(\neg q\to\blacksquare(q\to\blacksquare\neg q\land\blacklozenge\neg p)))\\
(F_7)&&& \blacklozenge(\neg p\land\lr{-}_1\blacksquare\blacksquare(\neg q\to\blacksquare(q\to\blacksquare p)))\\
(\textit{Spy})&&& \blacksquare(p\to\blacksquare(\neg q\to\blacksquare(q\to\lr{-}_1(\neg q\land\blacksquare\neg q\land\lr{-}_1(q\land\blacklozenge(p\land\blacksquare\neg q))))))\\
(\textit{Irr})&&& \blacksquare(p\to\blacksquare(q\to\lr{-}_1(\neg q\land\blacksquare\neg q\land\blacksquare\blacklozenge q)))\\
(\textit{No-3cyc})&&& \neg\blacklozenge(p\land\blacksquare(q\to\lr{-}_1(\neg q\land\blacksquare(\neg q\land\blacklozenge\blacklozenge(p\land\blacksquare\neg q)))\\
(\textit{Trans})&&&\blacksquare(p\to\blacksquare(q\to\lr{-}_{1}(\neg q\land\blacksquare\neg q\land\blacksquare\blacksquare(\neg q\to\blacksquare(q\to\lr{-}_1(\neg q\land \blacksquare\neg q\land\\
&&&\lr{-}_1(p\land\neg\blacklozenge q\land\blacklozenge\blacksquare q)))))))
\end{align*}

Let formula $\varphi_\infty$ be the conjunction of the formulas above. We first show that $\varphi_\infty$ is satisfiable. Consider the model depicted in Figure \ref{figure:infinite}. It holds that $\varphi_\infty$ is true at $w$.

Next, we prove that for any model $\mathcal{M}=\{W,R_1,R_2,V\}$ and $w\in W$, if $\mathcal{M},w\vDash\varphi_\infty$, then $W$ is infinite. For brevity, define that $B=\{v\in W\mid v\in R_1(w)\cap V(p)\}$, i.e., $B$ is the set of the $p$-points that can be reached by $w$ in one step via $R_1$. In what follows, we assume that all previous conjuncts hold.

By $(F_1)$, node $w$ is $(p\land q)$, and it cannon see any $q$-points via $R_1$. In particular, it cannot see itself via $R_1$. Besides, $w$ has at least one $p$-successor and at least one $\neg p$-successor via $R_1$, i.e., $B\not=\emptyset$ and $R_1(w)\setminus B\not=\emptyset$.

From formula $(F_2)$, we know that each element in $B$ can see some $(q\land p)$-point(s) and $(\neg q\land p)$-point(s) via $R_1$, but cannot see any $\neg p$-points by $R_1$. Hence each point in $B$ has at least one $R_1$-successor distinct from itself. 

According to formula $(F_3)$, for any $w_1\in B$, each its $R_1$-successor that is $q$ can see some $\neg p$-point(s) via $R_1$, but cannot see any $q$-points by $R_1$.

By $(F_4)$, it holds that $R_1(w)\setminus B\not=\emptyset$ includes only one element. Moreover, each $w_1\in B$ can see point $w$ via $R_1$, and for each $q$-point $w_2\in W$, if $w_2$ is a successor of $w_1$ via $R_1$, then $w_2$ must be $w$.  

Formulas $(F_2)$-$(F_4)$ show the properties of the $(\neg q\land p)$-points which are accessible from the point $w$ in one step by $R_1$. Similarly, formulas $(F_5)$, $(F_6)$ and $(F_7)$ play the same role as $(F_2)$, $(F_3)$ and $(F_4)$ respectively, but focusing on showing the properties of the $(\neg q\land p)$-points that are accessible from $w$ in two steps via $R_1$. In particular, $(F_7)$ guarantees that every $(\neg q\land p)$-point $w_1$ which is accessible from $w$ in two steps by $R_1$ can also see $w$ via $R_1$, and that for each $q$-point $w_2\in W$, if $R_1w_1w_2$, then $w_2$ must be $w$.

Formula (\textit{Spy}) is a bit complicated. It shows that, for any two $(\neg q\land p)$-points $w_1$ and $w_2$ such that $R_1ww_1$ and $R_1w_1w_2$, after we delete some link $\lr{v,v'}\in\{\lr{w,w_1},\lr{w_1,w_2},\lr{w_2,w}\}$, $v$ is $\neg q$ and does not have any $q$-successors. Since $w$ is $q$, $v$ cannot be $w$. Besides, if $\lr{v,v'}=\lr{w_1,w_2}$, after we cut the link $\lr{v,v'}$, $v$ still have one $q$ successor, i.e., $w$, so we have $\lr{v,v'}=\lr{w_2,w_w}$. Further more, after we delete $\lr{w,w_1}$, $w$ can reach a $p$-point $w_3$ via $R_1$ such that $w_3$ has no $q$ successor via $R_1$. Therefore, $w_3$ must be $w_2$. In such a way, (\textit{Spy}) ensures that each $(\neg q\land p)$-point $w_1$ which is accessible from $w$ in two steps via $R_1$ is also accessible from $w$ in one step via $R_1$.

By (\textit{Irr}), each $w_1\in B$ is irreflexive. Finally, (\textit{No-3cyc}) shows that the accessibility relations of $R_1$ cannot be cycles of length 2 or 3 in $B$, and (\textit{Trans}) forces the accessibility relation $R_1$ to transitively order $B$.

Hence $(B,R_1)$ is an unbounded strict partial order, thus $B$ is infinite and so is $W$. Now we know that $\varphi_\infty$ is satisfiable, and that for each $\lr{\mathcal{M},w}$, if $\mathcal{M},w\vDash\varphi_\infty$, then $\mathcal{M}$ is an infinite model. This completes the proof.\qed
\end{proof}

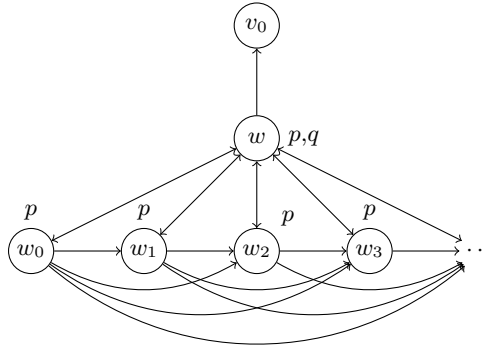
\begin{figure}
\centering
\begin{tikzpicture}
\node(a)[circle,draw,inner sep=0pt,minimum size=6mm, label=right:{$p$,$q$}] at (0,0) {$w$};
\node(c)[circle,draw,inner sep=0pt,minimum size=6mm] at (0,1.5) {$v_0$};
\node(e)[circle,draw,inner sep=0pt,minimum size=6mm,label=above:$p$] at (-3,-1.5) {$w_0$};
\node(f)[circle,draw,inner sep=0pt,minimum size=6mm,label=above:$p$] at (-1.5,-1.5) {$w_1$};
\node(g)[circle,draw,inner sep=0pt,minimum size=6mm,label=above right:$p$] at (0,-1.5) {$w_2$};
\node(h)[circle,draw,inner sep=0pt,minimum size=6mm,label=above:{$p$}] at (1.5,-1.5) {$w_3$};
\node(i)[circle,inner sep=0pt,minimum size=6mm] at (3,-1.5) {$\cdots$};
\draw[->](a) to (c);
\draw[<->](a) to (e);
\draw[<->](a) to (f);
\draw[<->](a) to (g);
\draw[<->](a) to (h);
\draw[<->](a) to (i);
\draw[->](e) to (f);
\draw[->](f) to (g);
\draw[->](g) to (h);
\draw[->](h) to (i);
\draw[->](e) to[bend right=30] (g);
\draw[->](f) to[bend right=30] (h);
\draw[->](f) to[bend right=35] (i);
\draw[->](e) to[bend right=35] (h);
\draw[->](e) to[bend right=40] (i);
\draw[->](g) to[bend right=30] (i);
\end{tikzpicture}
\caption{A model of formula $\varphi_\infty$ (every link in the model belongs to $R_1$, and $R_2=\emptyset$).}
\label{figure:infinite}
\end{figure}

We now proceed to show the undecidability of $\mathcal{L}_{\blacklozenge\lr{-}_1}$. To do so, we will reduce the $N\times N$ tiling problem to the satisfiability problem of this fragment. 

A tile $t$ is a $1\times1$ square, of fixed orientation, with colored edges \textit{right($t$)}, \textit{left($t$)}, \textit{up($t$)} and \textit{down($t$)}. The $N\times N$ tiling problem is: given a finite set of tile types $T$, is there a function $f:N\times N\to T$ with \textit{right(f(n,m))}=\textit{left(f(n+1,m))} and \textit{up(f(n,m))}=\textit{down(f(n,m+1))}? This problem is known to be undecidable (\cite{tile}).

Inspired by the technique in \cite{hybrid}, we will use three kinds of modalities $\blacklozenge_s$, $\blacklozenge_u$ and $\blacklozenge_r$ to stand for $\blacklozenge$. Correspondingly, a model $\mathcal{M}=\{W,R_s,R_u,R_r,R_2,V\}$ now has four kinds of relations. We are going to construct a spy point over relation $R_s$. Besides, $R_u$ and $R_r$ represent moving up and to the right, respectively, from one tile to the other. Intuitively, the union of these three relations can be treated as the $R_1$ relation of the model. Moreover, as illustrated by the following proof, they are disjoint with each other. So they are a partition of $R_1$. Thanks to this fact, we do not need any extra modalities to represent $\lr{-}_1$.

\begin{theorem}\label{theorem:undecidability of fragement}
The satisfiability problem for $\mathcal{L}_{\blacklozenge\lr{-}_1}$ is undecidable.
\end{theorem}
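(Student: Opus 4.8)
The plan is to reduce the $N \times N$ tiling problem to satisfiability in $\mathcal{L}_{\blacklozenge\lr{-}_1}$, following the spy-point technique of \cite{hybrid} but adapted so that the only ``hybrid-like'' resource we use is the interaction between $\blacklozenge$ and $\lr{-}_1$. First I would work with the three-sorted reading announced before the theorem: simulate $\blacklozenge$ by $\blacklozenge_s, \blacklozenge_u, \blacklozenge_r$ (so $R_1 = R_s \cup R_u \cup R_r$), where $R_s$ carries a spy point $w$ that is $R_s$-related to every grid node, $R_u$ encodes the ``move up'' edge and $R_r$ the ``move right'' edge. I would write a conjunction $\varphi_{\mathrm{tile}}$ of $\mathcal{L}_{\blacklozenge\lr{-}_1}$-formulas, evaluated at $w$, asserting: (i) $w$ sees (via $R_s$) a nonempty set $G$ of ``grid nodes'', every grid node sees $w$ back via $R_s$, and $R_s$ restricted to reachable nodes is exactly the spy relation (no $R_s$-edges among grid nodes, etc.); (ii) every grid node has exactly one $R_u$-successor and exactly one $R_r$-successor, all of them again grid nodes; (iii) the confluence (``grid'') condition: going $R_u$ then $R_r$ lands on the same node as going $R_r$ then $R_u$; and (iv) each grid node satisfies exactly one tile-type atom $p_t$ ($t \in T$), with the colour-matching constraints $\bigvee$-coded across $R_r$- and $R_u$-edges. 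Given such a model one reads off a tiling $f$ by sending $(n,m)$ to the tile at the node reached from a fixed origin by $n$ $R_r$-steps and $m$ $R_u$-steps; conversely $\mathbb{N}\times\mathbb{N}$ with the obvious edges and a spy point gives a model of $\varphi_{\mathrm{tile}}$ whenever a tiling exists.

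The delicate part is that $\mathcal{L}_{\blacklozenge\lr{-}_1}$ has no nominals, no global modality, no converse, and $\lr{-}_1$ can only delete edges lying \emph{on the current evaluation sequence} and \emph{not} in $R_2$ — and the deletion drags the evaluation point back to the source of the cut edge. So the spy-point machinery must be expressed entirely through such ``fire a $\blacklozenge$-path, then cut a link on it'' patterns, exactly as in the finite-model-property proof (Theorem \ref{theorem: fmp of fragement}) where formulas like $(\textit{Spy})$, $(\textit{Irr})$, $(\textit{Trans})$ were built. I would reuse precisely that style: to force, say, that two grid nodes reached by different $R_s$-paths are identical, I descend to them along $R_s$ and use a $\lr{-}_1$-cut on the relevant edge to detect that ``the only remaining $R_s$-successor with property $X$ is the spy point'', and so on. Since $R_2$ will be taken empty in the model (as in Figure \ref{figure:infinite}), the side-condition ``$\notin R_2$'' on $\lr{-}_1$ is vacuous and poses no obstruction. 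To keep the three relations genuinely disjoint (so that a single $\blacklozenge$ in the real language, dispatched by tag atoms distinguishing $s$-, $u$-, $r$-targets, faithfully simulates $\blacklozenge_s,\blacklozenge_u,\blacklozenge_r$) I would add conjuncts marking the type of each node and forbidding a node from being simultaneously, e.g., a $u$-successor and an $r$-successor of the same node; this is routine modal bookkeeping.

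The main obstacle I anticipate is the confluence/grid condition (iii): asserting ``$R_u;R_r = R_r;R_u$'' normally wants either converse modalities or nominals, and here we have neither. The trick — again in the spirit of the $\varphi_\infty$ construction above — is to route everything through the spy point: from $w$, the nodes reachable in ``$k$ grid-steps'' are all $R_s$-visible, so a quantitative uniqueness claim (``there is exactly one grid node that is an $R_r$-successor of an $R_u$-successor of $x$, and it coincides with the corresponding $R_u$-of-$R_r$ node'') can be phrased as: after suitable $\blacklozenge$-descents and one or two $\lr{-}_1$-cuts on the traversed edges, the residual model has a \emph{unique} node with the conjunction of the two ``corner'' properties. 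Getting the cut-and-backtrack bookkeeping right here — making sure the sequence we evaluate along actually contains the edge we want to delete, and that after backtracking we are positioned where the next subformula expects us — is where the real care goes; everything else (tile atoms mutually exclusive, colour matching across the two edge relations, origin node, wrapping up ``$\varphi_{\mathrm{tile}}$ satisfiable $\iff$ $T$ tiles $\mathbb{N}\times\mathbb{N}$'') is a direct adaptation of standard tiling reductions and of the formulas already displayed in the proof of Theorem \ref{theorem: fmp of fragement}. Undecidability of $\mathcal{L}_{\blacklozenge\lr{-}_1}$, and hence of full SLL, then follows from undecidability of the $N\times N$ tiling problem \cite{tile}.
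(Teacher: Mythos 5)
Your proposal follows essentially the same route as the paper's own proof: a reduction from the $N\times N$ tiling problem using a spy point over $R_s$ with $R_u,R_r$ partitioning the grid moves, $R_2=\emptyset$ so the side-condition on $\lr{-}_1$ is vacuous, uniqueness of $R_u$-/$R_r$-successors and the confluence condition all enforced by descending along $\blacklozenge$-paths and cutting traversed edges with $\lr{-}_1$ — this is precisely what the paper's formulas $(M_1)$–$(M_7)$, $(\textit{Spy})$, $(\textit{Func})$, $(\textit{No-UR})$, $(\textit{No-URU})$, $(\textit{Conv})$, $(\textit{Unique})$, $(\textit{Vert})$, $(\textit{Horiz})$ implement. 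You correctly identify the grid/confluence condition as the delicate point and propose the same spy-point-plus-cut mechanism the paper uses in $(\textit{Conv})$, so the plan is sound and matches the paper's argument.
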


\begin{proof}
Assume that $T=\{T_1,...,T_n\}$ be a finite set of tile types. For each $T_i\in T$, $u(T_i)$, $d(T_i)$, $l(T_i)$ and $r(T_i)$ are the colors of its up, down, left and right edges respectively. Besides, each tile type is coded with a fixed propositional atom $t_i$. Now we will show that $\varphi_T$, the conjunction of the following formulas, is satisfiable iff $T$ tiles $N\times N$. 

\begin{align*}
(M_1)&&& p\land q\land \blacklozenge_sp\land\blacklozenge_s\neg p\land \blacksquare_s\neg q\land\blacklozenge_s\lr{-}_1\blacksquare_sp\\
(M_2)&&& \blacksquare_s(p\to\blacklozenge_s\top\land\blacksquare_s(q\land\blacklozenge_s\neg p))\\
(M_3)&&& \blacklozenge_s(\neg p\land\lr{-}_1\blacksquare_s\blacksquare_s(q\land\neg\blacklozenge_s\neg p))\\
(M_4)&&&\blacksquare_s(p\to\blacklozenge_u\top\land\blacksquare_u(p\land\neg q\land\blacklozenge_s\top\land\blacksquare_s(q\land \blacklozenge_s\neg p)))\\
&&&\blacksquare_s(p\to\blacklozenge_r\top\land\blacksquare_r(p\land\neg q\land\blacklozenge_s\top\land\blacksquare_s(q\land \blacklozenge_s\neg p)))\\
(M_5)&&&\blacklozenge_s(\neg p\land\lr{-}_1\blacksquare_s\blacksquare_u\blacksquare_s\neg\blacklozenge_s\neg p)\\
&&&\blacklozenge_s(\neg p\land\lr{-}_1\blacksquare_s\blacksquare_r\blacksquare_s\neg\blacklozenge_s\neg p)\\
(M_6)&&& \blacksquare_s(p\to\blacksquare_u(\blacklozenge_u\top\land\blacklozenge_r\top\land\blacksquare_u(p\land\neg q)\land\blacksquare_r(p\land\neg q)))\\
&&& \blacksquare_s(p\to\blacksquare_r(\blacklozenge_u\top\land\blacklozenge_r\top\land\blacksquare_u(p\land\neg q)\land\blacksquare_r(p\land\neg q)))\\
(M_7)&&& \blacksquare_s(p\to\blacksquare_s(q\land\lr{-}_1(\neg q\land\blacksquare_u(\blacklozenge_sq\land\neg\blacklozenge_u\neg\blacklozenge_sq))))\\
&&& \blacksquare_s(p\to\blacksquare_s(q\land\lr{-}_1(\neg q\land\blacksquare_r(\blacklozenge_sq\land\neg\blacklozenge_r\neg\blacklozenge_sq))))\\
(\textit{Spy})&&& \blacksquare_s(p\to\blacksquare_u\blacksquare_s\lr{-}_1(\blacksquare_s\bot\land\lr{-}_1(p\land q\land\blacklozenge_s(p\land\blacksquare_s\bot))))\\
&&& \blacksquare_s(p\to\blacksquare_r\blacksquare_s\lr{-}_1(\blacksquare_s\bot\land\lr{-}_1(p\land q\land\blacklozenge_s(p\land\blacksquare_s\bot))))\\
(\textit{Func})&&& \blacksquare_s(p\to\blacksquare_s\lr{-}_1(\blacksquare_s\bot\land\blacksquare_u\lr{-}_1(\blacksquare_s\bot\land\blacksquare_u\bot))\\
&&& \blacksquare_s(p\to\blacksquare_s\lr{-}_1(\blacksquare_s\bot\land\blacksquare_r\lr{-}_1(\blacksquare_s\bot\land\blacksquare_r\bot))\\
(\textit{No-UR})&&&\blacksquare_s(p\to\blacksquare_s\lr{-}_1(\blacksquare_s\bot\land\blacksquare_u\blacksquare_r\blacklozenge_sq\land\blacksquare_r\blacksquare_u\blacklozenge_sq)) \\
(\textit{No-URU})&&& \blacksquare_s(p\to\blacksquare_s\lr{-}_1(\blacksquare_s\bot\land\blacksquare_u\blacksquare_r\blacksquare_u\blacklozenge_sq)) \\
(\textit{Conv})&&& \blacksquare_s(p\to\blacksquare_s\lr{-}_1(\blacksquare_s\bot\land\blacklozenge_u\blacksquare_s\lr{-}_1(\blacksquare_s\bot\land\blacklozenge_u\top\land \\
&&&\blacklozenge_r\blacksquare_u\lr{-}_1(\blacksquare_u\bot\land \blacklozenge_s\blacklozenge_s(p\land\blacksquare_s\bot\land\blacklozenge_r\blacklozenge_u\top\land\blacklozenge_r\blacklozenge_u(p\land\blacksquare_u\bot)))))) \\
(\textit{Unique})&&&\blacksquare_s(p\to\bigvee\limits_{1\le i\le n}t_i\land\bigwedge\limits_{1\le i<j\le n}(t_i\to\neg t_j))\\
(\textit{Vert})&&& \blacksquare_s(p\to \bigwedge\limits_{1\le i\le n}(t_i\to\blacklozenge_u\bigvee\limits_{1\le j\le n,\;u(T_i)=d(T_j)}t_j))\\
(\textit{Horiz})&&&\blacksquare_s(p\to \bigwedge\limits_{1\le i\le n}(t_i\to\blacklozenge_r\bigvee\limits_{1\le j\le n,\;r(T_i)=l(T_j)}t_j))
\end{align*}

Let $\mathcal{M}=\{W,R_s,R_u,R_r,R_2,V\}$ be a model and $w\in W$ such that $\mathcal{M},w\vDash\varphi_T$. We now show that $\mathcal{M}$ is a tiling of $N\times N$. Define $G:=\{v\in W\mid v\in R_s(w)\cap V(p)\}$ where $R_s(w)=\{v\in W\mid R_swv\}$, and we will use its elements to represent the tiles.

By $(M_1)$, node $w$ is $(p\land q)$, and it cannon see any $q$-points via $R_s$. So, $\neg R_sww$. Besides, $w$ has exactly one $\neg p$-successor (e.g., v) and some $p$-successor(s) via $R_1$, i.e., $G\not=\emptyset$ and $R_s(w)\setminus G=\{v\}$.

By $(M_2)$, each tile $w_1$ has some successor(s) via $R_s$, and each such successor $w_2$ is $q$ and also has some $\neg p$-successor(s) via $R_s$. It is worth noting that $(M_1)$ and $(M_2)$ illustrate that $R_s$ is irreflexive.

Formula $(M_3)$ ensures that each tile $w_1$ can see $w$ via $R_s$, and that for each $(q\land p)$-point $w_2\in W$, if $w_2$ is accessible from $w_1$ via $R_s$, then $w_2=w$.

From $(M_4)$, we know that each tile has some successor(s) via $R_u$ and some successor(s) via $R_r$. Besides, each point that is accessible from a tile via $R_u$ or $R_r$ is $(\neg q\land p)$, and it has some $q$-successor(s) $w_1$ via relation $R_s$ where each $w_1$ can see some $\neg p$-point(s) via $R_s$.  

By formula $(M_5)$, each $w_1\in W$ accessible from a tile via $R_u$ or $R_r$ can see $w$ by $R_s$. Also, for each $(q\land p)$-point $w_2\in W$, if it is accessible from $w_1$ via $R_s$, then $w_2=w$.

Formula $(M_6)$ ensures that each $w_1\in W$ that is accessible from a tile via $R_u$ or $R_r$ also has some successor(s) via $R_u$ and some successor(s) via $R_r$. Besides, each its successor via $R_u$ or $R_r$ is $(\neg q\land p)$.

From formula $(M_7)$, it follows that both $R_u$ and $R_r$ are irreflexive and asymmetric.

By (\textit{Spy}), $w$ is a spy point via the relation $R_s$.

Note that formula $(M_4)$ says that each tile has some tile(s) above it and some tile(s) to its right. Now, with (\textit{Func}), we have that each tile has exactly one tile above it and exactly one tile to its right.

By (\textit{No-UR}), any tile cannot be above/below as well as to the left/right of another tile. Formula (\textit{No-URU}) disallows cycles following successive steps of the $R_u$, $R_r$, and $R_u$ relations, in this order. Moreover, (\textit{Conv}) ensures that the tiles are arranged as a grid.

Formula (\textit{Unique}) guarantees that each tile has a unique type. Finally, (\textit{Vert}) and (\textit{Horiz}) force the colors of the tiles to match properly.

Thus we conclude that $\mathcal{M}$ is indeed a tiling of $N\times N$. 

\smallskip

Next we show the other direction required for our proof. Suppose the function $f: N\times N\to T$ is a tiling of $N\times N$. Define a model $\mathcal{M}=\{W,R_s,R_u,R_r,R_2,V\}$ as follows:
\begin{align*}
W&=(N\times N)\cup\{w,v\}\\
R_s&=\{\lr{w,v}\}\cup\{\lr{w,x}\mid x\in N\times N\}\cup\{\lr{x,w}\mid x\in N\times N\}\\
R_u&=\{\lr{\lr{n,m},\lr{n,m+1}}\mid n,m\in N\}\\ 
R_r&=\{\lr{\lr{n,m},\lr{n+1,m}}\mid n,m\in N\}\\ 
R_2&=\emptyset\\
V(q)&=\{w\}\\ 
V(p)&=\{w\}\cup (N\times N)\\
V(t_i)&=\{\lr{n,m}\in N\times N\mid f(\lr{n,m})=T_i\},\; {\rm{for\; each }}\; i\in\{1,...,n\}\\
V(r)&=\emptyset,\; {\rm{for\; any\; other\; propositional\; atoms}}\; r
\end{align*}
In particular, $w$ is a spy point in $\mathcal{M}$. By construction, we have $\mathcal{M},w\vDash\varphi_T$.
\qed
\end{proof}

By Theorem \ref{theorem: fmp of fragement}-\ref{theorem:undecidability of fragement}, it holds directly that:

\begin{theorem}\label{theorem:undecidability of sll}
SLL lacks the finite model property, and its satisfiability problem is undecidable. 
\end{theorem}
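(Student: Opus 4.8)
The plan is to obtain the statement as a direct corollary of Theorems~\ref{theorem: fmp of fragement} and~\ref{theorem:undecidability of fragement}, exploiting only the syntactic inclusion $\mathcal{L}_{\blacklozenge\lr{-}_1}\subseteq\mathcal{L}$ together with the fact that the semantic clauses of Definition~\ref{def-semantics} for $p$, $\neg$, $\land$, $\blacklozenge$ and $\lr{-}_1$ do not depend on which ambient language one regards a formula as belonging to. In particular, for a formula $\varphi\in\mathcal{L}_{\blacklozenge\lr{-}_1}$ the assertions ``$\varphi$ is satisfiable'' and ``$\varphi$ has a finite model'' have the same meaning whether $\varphi$ is read inside the fragment or inside the full language $\mathcal{L}$, since in both readings they quantify over the same class of pointed models (namely $\mathfrak{M}$, resp.\ $\mathfrak{M}^\bullet$) under the same truth definition.

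For the failure of the finite model property I would simply take the formula $\varphi_\infty$ built in the proof of Theorem~\ref{theorem: fmp of fragement}. It belongs to $\mathcal{L}_{\blacklozenge\lr{-}_1}$, hence to $\mathcal{L}$; it is satisfied at the pointed model of Figure~\ref{figure:infinite}, whose evaluation sequence is a singleton, so it is satisfiable; and, by that same proof, every $\lr{\mathcal{M},w}$ with $\mathcal{M},w\vDash\varphi_\infty$ has $W$ infinite. Thus $\varphi_\infty$ is an $\mathcal{L}$-formula that is satisfiable but has no finite model, so SLL lacks the finite model property.

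For undecidability I would observe that the identity map $\varphi\mapsto\varphi$ is a (trivially computable) reduction of the satisfiability problem of $\mathcal{L}_{\blacklozenge\lr{-}_1}$ to that of SLL: by the remark above, $\varphi$ is satisfiable in the fragment iff it is satisfiable as an $\mathcal{L}$-formula. Since the former problem is undecidable by Theorem~\ref{theorem:undecidability of fragement}, so is the latter. I do not expect a genuine obstacle here: the whole mathematical content sits in the two fragment theorems, and the present statement merely repackages them at the level of the full logic. The only point that deserves an explicit sentence in the write-up is the semantic-restriction observation of the first paragraph, which is what makes $\varphi_\infty$ and the identity reduction legitimate witnesses for $\mathcal{L}$ rather than only for $\mathcal{L}_{\blacklozenge\lr{-}_1}$.
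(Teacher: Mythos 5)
Your proposal is correct and is essentially identical to the paper's own argument: the paper derives the theorem directly from Theorems~\ref{theorem: fmp of fragement} and~\ref{theorem:undecidability of fragement}, using exactly the observation that $\mathcal{L}_{\blacklozenge\lr{-}_1}\subseteq\mathcal{L}$ with unchanged semantics, so $\varphi_\infty$ and the identity reduction transfer to the full logic. Your write-up merely makes explicit the (routine) conservativity point that the paper leaves implicit.
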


\section{Conclusion and Future Work}
\label{sec:conclusion}

\textbf{Summary} Motivated by restrictions on learning in SG, we have extended the game to SLG by naming right and wrong paths of learning, and let Teacher not only delete but also add links. Afterwards, logic SLL was presented, which enables us to reason about players' strategies in SLG. Besides, to understand the new device, we provided some interesting observations and logical validities. Next, we studied basics of its expressivity, including its first-order translation, a novel notion of bisimulation and a characterization theorem for SLL as a fragment of FOL that is invariant under the bisimulation introduced. Finally, it was proved that model checking for SLL is PSPACE-complete, and via the research on $\mathcal{L}_{\blacklozenge\lr{-}_1}$ we shown that SLL does not enjoy the finite model property and its satisfiability problem is undecidable. 

\vspace{.3cm}

\noindent\textbf{Relevant and Future Research} Broadly, this work takes a small step towards studying the interaction between graph games, logics and formal learning theory. We are inspired
by the work on SG \cite{lig}, SML \cite{sabotage} and their application to formal learning theory \cite{learning}. This article is also relevant to other work studying graph games with modal logics, such as \cite{poisonlogic,argumentation,declan,graphgame}. Technically, the logic SLL has resemblances to several recent logics with model modifiers, such as \cite{movingarrow,changeoperator,satisfiabilityrelationchange}. Besides, instead of updating links, \cite{stepwiseremoval} considers a logic of stepwise point deletion, which sheds light on the long-standing open problem of how to axiomatize the sabotage-style modal logics. Moreover, \cite{stepwiseremoval} is also helpful to understand the complexity jumps between dynamic epistemic logics of model transformations and logics of freely chosen graph changes recorded in current memory. Another relevant line of research for this paper is epistemic logics. As mentioned already, one goal of our work is to avoid the Gettier problem. Similar to this, \cite{topology} uses the topological semantics to study the full belief.

Except what have been studied in this article, there are still various open problems deserving to be studied. From the logic point of view, Section \ref{subsec:application} shows that logic SLL is able to express the winning positions for players in finite games, but to capture those for infinite games, can SLL be expanded with some least-fixpoint operators? From the translation described in Definition \ref{def-translation} we know that SLL are effectively axiomatizable. However, is it possible to axiomatize the logic via a Hilbert-style calculus? In terms of games, we do not know the complexity of SLG, although Theorem \ref{theorem: model checking SLL} provides us with an upper bound. Besides, SLG includes exactly two players, and it is also meaningful to study the cases that are more general.

\vspace{.5cm}

\noindent{\bf{Acknowledgments.}} We thank Johan van Benthem, Fenrong Liu, Nina Gierasimczuk, Lena Kurzen, and Fernando R. Vel\'azquez-Quesada for their inspiring suggestions. We also wish to thank three anonymous LORI-VII referees for improvement
comments. Dazhu Li is supported by China Scholarship Council and the Major Program of the National Social Science Foundations of China [17ZDA026].
\bibliographystyle{plain}
\bibliography{LORI76}
\end{document}